\newcommand{\?}[1]{
  \sbox0{A#1}\sbox2{A\kern0pt #1}%
  \kern\dimexpr\wd0-\wd2\relax
  #1%
}
\algrenewcommand\alglinenumber[1]{{\scriptsize#1}}   
\algrenewcommand\algorithmicrequire{\textbf{Input:}} 
\algrenewcommand\algorithmicensure{\textbf{Output:}} 
\algnewcommand\algorithmicassume{\textbf{Assumption:}} 
\algrenewcommand\algorithmicfunction{\textbf{Fun}}
\algnewcommand\Assume{\item[\algorithmicassume]}
\newcommand{\assign}{:=}
\newcommand{\inlcomment}[1]{\texttt{\small/* #1 */}}
\newcommand{\eolcomment}[1]{\hfill\texttt{\small// #1}}
\theoremstyle{plain}
\newtheorem{theorem}{Theorem}[section]
\newtheorem{proposition}[theorem]{Proposition}
\newtheorem{lemma}[theorem]{Lemma}
\newtheorem{corollary}[theorem]{Corollary}
\newtheorem{definition}[theorem]{Definition}
\theoremstyle{remark}
\newtheorem{XxmpX}[theorem]{Example}
\newtheorem{XrmkX}[theorem]{Remark}
\newenvironment{example}
  {%
   \pushQED{\qed}\begin{XxmpX}}
  {\popQED\end{XxmpX}}
\newenvironment{remark}
  {%
   \pushQED{\qed}\begin{XrmkX}}
  {\popQED\end{XrmkX}}
\definecolor{orange}{rgb}{1,0.5,0}
\definecolor{purple}{rgb}{1,0,0.5}
\definecolor{darkgreen}{rgb}{0,0.5,0}
\newcommand{\Z}{\mathbb{Z}}
\newcommand{\N}{\mathbb{Z}_{\ge 0}}
\def\F{{\mathbb F}}
\begin{document}

\title{Two-Point Codes for the Generalised GK curve}

\author{\'E{}lise Barelli, Peter Beelen, Mrinmoy Datta, Vincent Neiger, Johan Rosenkilde}
\thanks{\'Elise Barelli is partially supported by a DGA-MRIS scholarship and a
French ANR-15-CE39-0013-01 ``Manta''. Peter Beelen gratefully acknowledges
the support by The Danish Council for Independent Research (Grant
No.\,DFF--4002-00367). Vincent Neiger has received funding from the People
Programme (Marie Curie Actions) of the European Union's Seventh Framework
Programme (FP7/2007-2013) under REA grant agreement number 609405
(COFUNDPostdocDTU). Mrinmoy Datta is supported by The Danish Council for
Independent Research (Grant No.\,DFF6108-00362).}%
\thanks{\'E{}lise~Barelli is with INRIA Saclay and LIX, \'Ecole Polytechnique,
91120 Palaiseau Cedex, France (e-mail: elise.barelli@inria.fr).
Peter~Beelen, Mrinmoy~Datta, Vincent~Neiger, and Johan~Rosenkilde are
with the Department of Applied Mathematics and Computer Science, Technical
University of Denmark, 2800 Kgs. Lyngby, Denmark (e-mails: pabe@dtu.dk,
mrinmoy.dat@gmail.com, jsrn@dtu.dk).
Vincent~Neiger is also with XLIM, Universit\'e de Limoges, 87060
Limoges Cedex, France (e-mail: vincent.neiger@unilim.fr).}

\maketitle

\begin{abstract}
We improve previously known lower bounds for the minimum distance of certain two-point AG codes constructed using a Generalized Giulietti--\?Korchmaros curve (GGK).
Castellanos and Tizziotti recently described such bounds for two-point codes coming from the Giulietti--\?Korchmaros curve (GK).
Our results completely cover and in many cases improve on their results, using different techniques, while also supporting any GGK curve.
Our method builds on the order bound for AG codes: to enable this, we study certain Weierstrass semigroups.
This allows an efficient algorithm for computing our improved bounds.
We find several new improvements upon the MinT minimum distance tables.
\end{abstract}

\section{Introduction}

Algebraic geometry (AG) codes are a class of linear codes constructed from algebraic curves defined over a finite field. This class continues to provide examples of good codes when considering their basic parameters: the length $n$, the dimension $k$, and the minimum distance $d$. If the algebraic curve used to construct the code has genus $g$, the minimum distance $d$ satisfies the inequality $d \ge n-k+1-g$. This bound, a consequence of the Goppa bound, implies that the minimum distance of an AG code can be designed. It is well known that the Goppa bound is not necessarily tight, and there are various results and techniques which can be used to improve upon it in specific cases. Such a result has been given in \cite[Thm.\,2.1]{M}, where the Goppa bound is improved by one. Another approach to give lower bounds on the minimum distance of AG codes is described in \cite{handbook} and the references therein. This type of lower bound is often called the \emph{order bound}; various refinements and generalizations have been given, for example in \cite{B,DKP}.

To obtain good AG codes, the choice of the algebraic curve in the construction plays a key role. A very good class of curves are the so-called maximal curves, i.e., algebraic curves defined over a finite field having as many rational points as allowed by the Hasse--\?Weil bound. More precisely, a maximal curve of genus $g$ defined over a finite field $\F_q$ with $q$ elements, has $q+1+2\sqrt{q}g$ $\F_q$-rational points, i.e., points defined over $\F_q$; this only makes sense if the cardinality $q$ is a square number. An important example of a maximal curve is the Hermitian curve, but recently other maximal curves have been described \cite{GK,GGS}, often called the \emph{generalized Giulietti--\?Korchm\'aros (GK) curves}. In this article we continue the study of two-point AG codes coming from the generalized GK curves that was initiated in \cite{CT}. However, rather than using the improvement upon the Goppa bound from \cite{M}, we use the order bound as given in \cite{B}. As a matter of fact, we also show that the order bound from \cite{B} implies Theorem 2.1 in \cite{M}. Thus, we will automatically recover all the results in \cite{CT}, but on various occasions we obtain better bounds for the minimum distance than the ones reported in \cite{CT}. We will also paraphrase the order bound from \cite{B} and explain how we have computed it. A key object in this computation is a two-point generalization of a Weierstrass semigroup given in \cite{BT}, and therefore some time will be used to describe this semigroup explicitly in the case of certain pairs of points on the generalized GK curve.

After finishing this work, we were made aware of the contemporaneous work \cite{HY}. In \cite{HY} multi-point codes and their duals from the generalized GK function field are constructed and investigated. Proposition \ref{prop:dual} is different from, but akin to \cite[Thm.\,2]{HY} and similar proof techniques were used. The techniques used in \cite{HY} to analyse the code parameters are very different from ours and more related to the ones used in \cite{CT}. Our main tools, the explicit computation of the map $\tau_{0,\infty}$ in Corollary \ref{cor:tau} and the resulting algorithm to compute the order bound, were not employed in \cite{HY}. Our improvements on the MinT code tables are not present in \cite{HY}.

\section{Preliminaries}

Though later we will only consider the generalized GK curves, we will in this section consider any algebraic curve $\chi$ defined over a finite field $\F_q$. The field of functions on $\chi$, or briefly the function field of $\chi$, will be denoted by $\F_q(\chi)$, while the genus of $\chi$ is denoted by $g(\chi)$. Rather than using the language of curves, we will formulate the theory using the language of function fields; see \cite{S} for more details. In particular, we will speak about places of $\F_q(\chi)$ rather than points of $\chi$. For any place $Q$ of $\F_q(\chi)$, we denote by $v_Q$ the valuation map at the place $Q$. The valuation $v_Q: \F_q(\chi)\setminus \{0\} \rightarrow \Z$ sends a nonzero function $f$ to its order of vanishing at $Q$. If $v_Q(f)<0$, one also says that $f$ has a pole of order $-v_Q(f)$ at $Q$.

A divisor of $\F_q(\chi)$ is a finite formal sum $\sum_i n_i Q_i$ of places $Q_i$ of $\F_q(\chi)$, where the $n_i$'s are integers in $\Z$. The support of a divisor $\sum_i n_i Q_i$ is the (finite) set of places $\{Q_i \mid n_i \neq 0\}.$ Finally, we call two divisors disjoint if they have disjoint supports. To any nonzero function $f \in \F_q(\chi)$ one can associate two divisors $(f)$ and $(f)_\infty$ known as the divisor of $f$ and the divisor of poles of $f$ respectively, given by:
$$(f):=\sum_Q v_Q(f)\, Q \quad \makebox{and} \quad (f)_\infty:=\sum_{Q;v_Q(f)<0}-v_Q(f)\, Q.$$
If all the coefficients $n_i$ in a divisor $G=\sum_i n_i Q_i$ are nonnegative, we call $G$ an effective divisor; notation $G \ge 0$.

We now recall some notations for AG codes; we again refer to \cite{S} for a more comprehensive exposition. Let $ \mathcal{P} = \{ P_1,\dots,P_n \}$ be a set of $n$ distinct rational places of $\F_q(\chi)$, i.e., places of degree $1$, and define the divisor $D=P_1+\cdots+P_n$. Further let $G$ be a divisor
such that $\deg(G) < n$ and $G$ does not contain any place of $\mathcal{P}$. We consider the following map:
\[ \begin{array}{cccl}
    Ev_{\mathcal{P}}: & \F_q(\chi)_{\mathcal{P}} & \longrightarrow & \F^n \\
            & f & \longmapsto & (f(P_1),\dots,f(P_n)).
   \end{array} \]
Here $\F_q(\chi)_{\mathcal{P}}$ denotes the subset of $\F_q(\chi)$ consisting of functions not having a pole at any $P \in \mathcal{P}$. Then we define the AG code $C_L(D,G)$ by $C_L(D,G) := \{ Ev_{\mathcal{P}}(f) \mid f \in L(G) \}$.
Here $L(G)$ denotes the Riemann--\?Roch space $L(G):=\{f \in \F_q(\chi)\setminus\{0\} \mid (f)+G \ge 0\} \cup \{0\}$.
It is well known that the minimum distance $d$ of $C_L(D,G)$ (resp. $C_L(D,G)^\perp$) satisfies the Goppa bound $d \ge n - \deg(G)$ (resp. $d \ge \deg(G) -2g(\chi)+2$).

Here, we will make use of another lower bound for the minimum distance of
$C_L(D,G)^\perp$, obtained in \cite{B}. We will use the notions of $G$-gaps and $G$-non-gaps at a place $Q$, which were for example also used in \cite{GKL}.

\begin{definition}
Let $Q$ be a rational place and $G$ be a rational divisor of $\F_q(\chi)$.
We define $L(G+\infty Q) := \bigcup_{i \in \Z}{L(G + iQ)}$ and
\[H(Q;G) := \{-v_Q(f) \mid f \in L(G+\infty Q) \setminus \{0\} \}. \]
We call $H(Q;G)$ the set of \emph{$G$-non-gaps at $Q$}. The set
\[ \Gamma(Q;G):=\Z_{\ge v_Q(G)-\deg(G)} \setminus H(Q;G) \]
is called the set of \emph{$G$-gaps at $Q$}.
\end{definition}

Note that if $G=0$ we obtain $H(Q;0)=H(Q)$, the Weierstrass semigroup of $Q$, and $\Gamma(Q;0)=\Gamma(Q)$, the set of gaps at $Q$. Further, note that if $i \in H(Q;F_1)$ and $j \in H(Q;F_2)$, then $i+j \in H(Q;F_1+F_2)$. Finally, observe that the theorem of Riemann--\?Roch implies that the number of $G$-gaps at $Q$ coincides with the genus of $\chi$, that is, $|\Gamma(Q;G)|=g(\chi)$.

\begin{remark}\label{rem:2.2}
If $i < -\deg(G)$ then $\deg(G+iQ) < 0$ and $L(G+iQ) = \{0\}$. So in the previous
definition we can write $L(G+\infty Q) = \bigcup_{i \ge -\deg(G)}{L(G + iQ)}$.
Further, note that for any $a\in \Z$ we have $L(G+aQ+\infty Q)=L(G+\infty Q)$ and hence $H(Q;G+aQ)=H(Q;G)$ as well as $\Gamma(Q;G+aQ)=\Gamma(Q;G)$.
\end{remark}

\begin{definition}
Let $Q$ be a rational place and let $F_1$, $F_2$ be two divisors of $\chi$. As in \cite{B} we define
\begin{align*}
N(Q;F_1,F_2) &:=\{(i,j) \in H(Q;F_1) \times H(Q;F_2) \mid i+j = v_Q(G) + 1 \}, \\
\nu(Q;F_1,F_2) &:= |N(Q;F_1,F_2)|.
\end{align*}
\end{definition}

\begin{proposition}\cite[Prop.\,4]{B}\label{prop:B}
Let $D=P_1+\cdots + P_n$ be a divisor that is a sum of $n$ distinct rational places of $\F_q(\chi)$, $Q$ be
a rational place not occurring in $D$, and $F_1,F_2$ be two divisors disjoint from $D$. Suppose that $C_L(D,F_1+F_2) \neq C_L(D,F_1+F_2 + Q)$. Then, for any codeword $c \in C_L(D,F_1+F_2)^{\perp}  \setminus C_L(D,F_1+F_2 + Q)^{\perp}$, we have $$w_H(c) \ge \nu(Q;F_1,F_2).$$ In particular, the minimum distance $d(F_1+F_2)$ of $C_L(D,F_1+F_2)^\perp$ satisfies
$$d(F_1+F_2) \ge \min\{\nu(Q;F_1,F_2),d(F_1+F_2+Q)\},$$ where $d(F_1+F_2+Q)$ denotes the minimum distance of $C_L(D,F_1+F_2+Q)^\perp.$
\end{proposition}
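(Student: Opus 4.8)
The plan is to adapt the classical Feng--Rao / order-bound argument: from the codeword $c$ I will build a square matrix of ``syndromes'' indexed by the pairs in $N(Q;F_1,F_2)$, show that this matrix is lower triangular with nonzero diagonal (hence of full rank $\nu$), and finally bound its rank by the Hamming weight of $c$. First, for each pair $(i,j) \in N(Q;F_1,F_2)$ I would choose functions $f_i \in L(F_1+\infty Q)$ and $g_j \in L(F_2+\infty Q)$ with $v_Q(f_i)=-i$ and $v_Q(g_j)=-j$; these exist by the very definition of the non-gap sets. Since $F_1,F_2$ are disjoint from $D$ and $Q\notin\mathrm{supp}(D)$, neither these functions nor their products have poles at the $P_k$, so all evaluations are defined. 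For $f\in L(F_1+\infty Q)$ and $g\in L(F_2+\infty Q)$ I set $S(f,g):=\sum_{k=1}^n c_k f(P_k)g(P_k)=\langle c, Ev_{\mathcal P}(fg)\rangle$. The two basic facts I will use are: (a) $v_Q(f_ig_j)=-(i+j)$ with poles away from $Q$ bounded by $F_1+F_2$, so $f_ig_j\in L(F_1+F_2+mQ)$ for $m=i+j-v_Q(F_1+F_2)$; and (b) if $i+j\le v_Q(F_1+F_2)$ then $f_ig_j\in L(F_1+F_2)$, whence $S(f_i,g_j)=0$ because $c\in C_L(D,F_1+F_2)^\perp$.

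Next I would enumerate $N(Q;F_1,F_2)=\{(i_1,j_1),\dots,(i_\nu,j_\nu)\}$ with $i_1<\cdots<i_\nu$, so that $j_1>\cdots>j_\nu$ since $i_a+j_a=v_Q(F_1+F_2)+1$ is constant, and form the $\nu\times\nu$ matrix $M=(S(f_{i_a},g_{j_b}))_{a,b}$. For $b>a$ one has $i_a+j_b<i_a+j_a=v_Q(F_1+F_2)+1$, hence $i_a+j_b\le v_Q(F_1+F_2)$ and $S(f_{i_a},g_{j_b})=0$ by (b); thus $M$ is lower triangular. For the diagonal, the key input is the hypothesis $C_L(D,F_1+F_2)\neq C_L(D,F_1+F_2+Q)$, which forces $L(F_1+F_2+Q)\supsetneq L(F_1+F_2)$, so that the quotient is one-dimensional, spanned by the class of any $h_0$ of pole order exactly $v_Q(F_1+F_2)+1$ at $Q$. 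Each $f_{i_a}g_{j_a}$ has pole order exactly $v_Q(F_1+F_2)+1$, so $f_{i_a}g_{j_a}=\lambda_a h_0+\phi_a$ with $\lambda_a\neq 0$ and $\phi_a\in L(F_1+F_2)$; consequently $M_{aa}=\lambda_a\,\langle c, Ev_{\mathcal P}(h_0)\rangle$.

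The step I expect to be the real crux is showing $\langle c, Ev_{\mathcal P}(h_0)\rangle\neq 0$, i.e.\ that the chosen codeword genuinely ``sees'' the new function; this is precisely where $c\notin C_L(D,F_1+F_2+Q)^\perp$ enters. Any $\phi\in L(F_1+F_2+Q)$ with $\langle c, Ev_{\mathcal P}(\phi)\rangle\neq 0$ decomposes as $\phi=\mu h_0+\psi$ with $\psi\in L(F_1+F_2)$, and since $c$ annihilates $Ev_{\mathcal P}(L(F_1+F_2))$ we get $\langle c, Ev_{\mathcal P}(\phi)\rangle=\mu\,\langle c, Ev_{\mathcal P}(h_0)\rangle$, forcing $\langle c, Ev_{\mathcal P}(h_0)\rangle\neq 0$. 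Hence every diagonal entry is nonzero and $\operatorname{rank} M=\nu$. Writing $M=A\,\operatorname{diag}(c_1,\dots,c_n)\,B^{\!\top}$ with $A=(f_{i_a}(P_k))_{a,k}$ and $B=(g_{j_b}(P_k))_{b,k}$ gives $\nu=\operatorname{rank} M\le\operatorname{rank}\operatorname{diag}(c_1,\dots,c_n)=w_H(c)$, proving the first claim.

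For the ``in particular'' part I would take a minimum-weight codeword $c$ of $C_L(D,F_1+F_2)^\perp$ and split into two cases: if $c\in C_L(D,F_1+F_2+Q)^\perp$ then $w_H(c)\ge d(F_1+F_2+Q)$ directly, while if $c\notin C_L(D,F_1+F_2+Q)^\perp$ then $w_H(c)\ge\nu(Q;F_1,F_2)$ by the first part. In either case $w_H(c)\ge\min\{\nu(Q;F_1,F_2),d(F_1+F_2+Q)\}$, which yields the stated lower bound on $d(F_1+F_2)$.
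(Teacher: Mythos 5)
Your proof is correct and complete: the triangular syndrome matrix built from the pairs in $N(Q;F_1,F_2)$, the nonvanishing of the diagonal via the one-dimensional quotient $L(F_1+F_2+Q)/L(F_1+F_2)$ and the hypothesis $c \notin C_L(D,F_1+F_2+Q)^\perp$, and the rank bound through the factorization $M = A\,\mathrm{diag}(c_1,\dots,c_n)\,B^{\top}$ are exactly the ingredients needed. Note that the paper itself offers no proof of this statement --- it imports it by citation from \cite{B} --- so there is nothing internal to compare against; your argument is the standard Feng--Rao/order-bound proof, which is essentially the one given in the cited reference.
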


To arrive at a lower bound for the minimum distance of $C_L(D,G)^{\perp}$, one applies this proposition in a recursive manner. More precisely, one constructs a sequence $Q^{(1)}, \dots, Q^{(N)}$ of not necessarily distinct rational places, none occurring in $D$, such that $C_L(D,G+Q^{(1)}+\cdots+Q^{(N)})^{\perp}=0$.
Such a sequence exists, since the theorem of Riemann--\?Roch implies that $C_L(D,G+Q^{(1)}+\cdots+Q^{(N)})=\F_q^n$ as soon as $N \ge 2g(\chi)-1+n-\deg(G)$. Then, the code $C_L(D,G)^\perp$ has for example minimum distance at least $\min \nu(Q^{(i)};G+Q^{(1)}+\cdots+Q^{(i-1)},0)$, where the minimum is taken over all $i$ satisfying $1 \le i \le N$ and $C_L(D,G+Q^{(1)}+\cdots+Q^{(i-1)}) \neq C_L(D,G +Q^{(1)}+\cdots+Q^{(i)}).$

The well known Goppa bound is a direct consequence of \cref{prop:B} as shown in \cite[Lem.\,9]{B}. We will need the following slightly more general version of \cite[Lem.\,9]{B}.
\begin{lemma}\label{lem:B}
Let $D=P_1+\cdots + P_n$ be a sum of distinct rational places, let $Q$ be a rational place not occurring in $D$, and let $F_1,F_2$ be two divisors disjoint from $D$. Then $\nu(Q;F_1,F_2) \ge \deg(F_1+F_2)-2g+2.$
\end{lemma}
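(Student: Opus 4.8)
The plan is to prove this by a pure counting argument on the antidiagonal $i+j = v_Q(F_1+F_2)+1$, in the same spirit as the way the classical Goppa bound arises from counting gaps. I note at the outset that the divisor $D$ and the disjointness hypotheses are irrelevant to this particular inequality, which concerns only the gap structure at $Q$; I would not use them.

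First I would fix the local notation $s := v_Q(F_1)+v_Q(F_2)+1$ and $m_t := v_Q(F_t)-\deg(F_t)$ for $t \in \{1,2\}$. By the very definition of the gap set we have $\Gamma(Q;F_t) = \Z_{\ge m_t}\setminus H(Q;F_t)$, so that for every integer $a \ge m_t$ exactly one of $a \in H(Q;F_t)$ or $a \in \Gamma(Q;F_t)$ holds. Unwinding the definition of $N(Q;F_1,F_2)$ with $v_Q(F_1+F_2)+1 = s$, an index $i$ contributes the pair $(i,s-i)$ precisely when $i \in H(Q;F_1)$ and $s-i \in H(Q;F_2)$; hence $\nu(Q;F_1,F_2)$ equals the number of integers $i$ with this property.

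Next I would restrict attention to the window $W := \{i \in \Z : m_1 \le i \le s-m_2\}$, which is exactly the range in which both coordinates $i$ and $s-i$ lie at or above their respective thresholds $m_1$ and $m_2$. A short computation gives $(s-m_2)-m_1 = \deg(F_1)+\deg(F_2)+1$, so that $|W| = \deg(F_1+F_2)+2$. For $i \in W$, the pair $(i,s-i)$ fails to lie in $N(Q;F_1,F_2)$ only if $i \in \Gamma(Q;F_1)$ or $s-i \in \Gamma(Q;F_2)$; since $i \mapsto s-i$ is a bijection between $W$ and the interval of corresponding second coordinates, the number of such ``bad'' indices is at most $|\Gamma(Q;F_1)| + |\Gamma(Q;F_2)|$.

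Finally I would invoke the Riemann--Roch consequence recorded just after the first definition, namely $|\Gamma(Q;F_t)| = g$ for each $t$, to bound the number of bad indices by $2g$. Subtracting these from the window then yields $\nu(Q;F_1,F_2) \ge |W| - 2g = \deg(F_1+F_2) - 2g + 2$, which is the claim; any pairs lying outside $W$ only help. I do not expect a genuine obstacle here: the entire content lies in choosing the window $W$ so that membership in $H(Q;F_t)$ is controlled solely by the two finite gap sets, together with the bookkeeping that $|W| = \deg(F_1+F_2)+2$. The bound then falls out by discarding at most $g$ lost values from each of the two sides.
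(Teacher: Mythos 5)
Your proof is correct, and it reaches the bound by direct counting rather than by the paper's generating-function computation. The paper encodes $H(Q;F_t)$ and $\Gamma(Q;F_t)$ in formal Laurent series $p_{Q;F_t}(t)=\sum_{i\in H(Q;F_t)}t^i$ and $q_{Q;F_t}(t)=\sum_{i\in\Gamma(Q;F_t)}t^i$, uses the identity $p_{Q;F_t}(t)+q_{Q;F_t}(t)=t^{m_t}/(1-t)$ (which is exactly the dichotomy you invoke on $\Z_{\ge m_t}$), expands the product $p_{Q;F_1}(t)\cdot p_{Q;F_2}(t)$, and reads off the coefficient of $t^{s}$: the $1/(1-t)^2$ term contributes $\deg(F_1+F_2)+2$, which is precisely your window size $|W|$; the two $-g(\chi)/(1-t)$-type terms remove at most $2g$, matching your count of bad indices; and the remaining series have nonnegative coefficients, matching what you discard. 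So the two arguments are in exact correspondence, with yours the more elementary and self-contained rendering; you are also right that $D$, the place $Q$ avoiding $D$, and the disjointness hypotheses play no role in this inequality. What the paper's formulation buys is reusability: in the first theorem of Section~4, the authors sharpen the bound by one precisely by observing that the term $q_{Q;\cdot}(t)\cdot q_{Q;\cdot}(t)$ contributes at least $1$ to the relevant coefficient --- in your language, that some $i\in W$ satisfies $i\in\Gamma(Q;F_1)$ \emph{and} $s-i\in\Gamma(Q;F_2)$ simultaneously, so that subtracting $|\Gamma(Q;F_1)|+|\Gamma(Q;F_2)|$ over-counts; your argument could be upgraded by inclusion--exclusion to capture this, but the series bookkeeping makes it immediate. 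One small sharpening: your aside that pairs outside $W$ ``only help'' can be strengthened to the statement that there are none, since $i<m_1$ forces $i\notin H(Q;F_1)$ and $i>s-m_2$ forces $s-i\notin H(Q;F_2)$, so $\nu(Q;F_1,F_2)$ is \emph{exactly} the within-window count.
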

\begin{proof}
Define the formal Laurent series $$p_{Q;F_1}(t):=\sum_{i \in H(Q;F_1)} t^i \hspace{1cm} \makebox{and} \hspace{1cm} p_{Q;F_2}(t):=\sum_{i \in H(Q;F_2)} t^i.$$
Then $\nu(Q;F_1,F_2)$ is the coefficient of $t^{v_Q(F_1+F_2)+1}$ in the Laurent series $p_{Q;F_1}(t)\cdot p_{Q;F_2}(t)$. The lemma follows by analyzing this product carefully. First we introduce
$$q_{Q;F_1}(t):=\sum_{i \in \Gamma(Q;F_1)} t^i \hspace{1cm} \makebox{and} \hspace{1cm} q_{Q;F_2}(t):=\sum_{i \in \Gamma(Q;F_2)} t^i.$$
Then $$p_{Q;F_1}(t)+q_{Q;F_1}(t)=\dfrac{t^{v_Q(F_1)-\deg(F_1)}}{1-t} \hspace{1cm} \makebox{and} \hspace{1cm} p_{Q;F_2}(t)+q_{Q;F_2}(t)=\dfrac{t^{v_Q(F_2)-\deg(F_2)}}{1-t},$$
implying that
\begin{multline*}
p_{Q;F_1}(t)\cdot p_{Q;F_2}(t)=
t^{v_Q(F_1+F_2)-\deg(F_1+F_2)} \left( \dfrac{1}{(1-t)^2}-\frac{2g(\chi)}{1-t}\right.\\
\left.+\dfrac{g(\chi)-t^{-v_Q(F_2)+\deg(F_2)}q_{Q;F_2}(t)}{1-t}
+\dfrac{g(\chi)-t^{-v_Q(F_1)+\deg(F_1)}q_{Q;F_1}(t)}{1-t} \right)
+q_{Q;F_1}(t)\cdot q_{Q;F_2}(t).
\end{multline*}
Since both $t^{-v_Q(F_1)+\deg(F_1)}q_{Q;F_1}(t)$ and $t^{-v_Q(F_2)+\deg(F_2)}q_{Q;F_2}(t)$ are a sum of $g(\chi)$ distinct nonnegative powers of $t$, the last three Laurent series in the above expression are in fact finite Laurent series with nonnegative coefficients. Hence the coefficient of $t^{v_Q(F_1+F_2)+1}$ in $p_{Q;F_1}(t)\cdot p_{Q;F_2}(t)$ is bounded from below by the corresponding coefficient in $$t^{v_Q(F_1+F_2)-\deg(F_1+F_2)}\left(1/(1-t)^2-2g(\chi)/(1-t) \right),$$ which is $\deg(F_1+F_2)-2g(\chi)+2.$
\end{proof}

In this paper, we are interested in a lower bound on the minimum distance for two-point AG codes. We will typically apply \cref{prop:B} to the special setting where $F_1=0$ and $F_2=G= a_1Q_1 + a_2Q_2$, with $Q_1,Q_2$ two rational places of $\F_q(\chi)$. Hence we want to compute $\nu(Q;G):=\nu(Q;0,G)$ where $G = a_1Q_1 + a_2Q_2$. Furthermore, we will only consider the case where $Q \in \{Q_1,Q_2\}.$ In order to compute the number $\nu(Q;G)$, we need to know the Weierstrass semigroup
$H(Q)$ and the set $H(Q;G)$ of $G$-non-gaps at $Q$. A very practical object in this setting is a two-point generalization of the Weierstrass semigroup and a map between two Weierstrass semigroups considered in \cite{BT}:

\begin{definition}\label{def:tau}
Let $Q_1,Q_2$ be two distinct rational places of $\F_q(\chi)$. We define
$R(Q_1,Q_2) := \{ f \in \F_q(\chi) \mid \mathrm{Supp}((f)_\infty) \subseteq \{Q_1,Q_2\} \}$, the
ring of functions on $\chi$ that are regular outside the points $Q_1$ and $Q_2$.
The two-point Weierstrass semigroup of $Q_1$ and $Q_2$ is then defined as:
\[H(Q_1,Q_2) := \{ (n_1,n_2) \in \Z^2 \mid \exists f \in R(Q_1,Q_2)\setminus \{0\},~
v_{Q_i}(f) = -n_i, i \in \{1,2\}\}.\]
Further we define the following map:
\[ \begin{array}{cccl}
    \tau_{Q_1,Q_2}: & \Z & \longrightarrow & \Z \\
            & i & \longmapsto & \min\{j \mid (i,j) \in H(Q_1,Q_2)\}.
   \end{array} \]
\end{definition}

\begin{remark}\label{rem:tau}
Note that $H(Q_1,Q_2) \subseteq \{(i,j) \in \Z^2 \mid i+j \ge 0\}$, since $L(iQ_1+jQ_2)=\{0\}$ if $i+j < 0$. In particular, we have for any $i \in \Z$ that $\tau_{Q_1,Q_2}(i) \ge -i$. Moreover, the theorem of Riemann--\?Roch implies that $\tau_{Q_1,Q_2}(a_1) \le 2g(\chi)-a_1.$
\end{remark}

\begin{proposition}\cite[Prop.\,14]{BT} Let $Q_1,Q_2$ be two distinct rational places of
 $\F_q(\chi)$. The map $\tau_{Q_1,Q_2}$ is
bijective and $\tau_{Q_1,Q_2}^{-1} = \tau_{Q_2,Q_1}$.
\end{proposition}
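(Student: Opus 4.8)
The plan is to reduce everything to a single ``cancellation'' observation at the place $Q_2$ and then exploit the symmetry between $Q_1$ and $Q_2$ to get surjectivity essentially for free. First I would record that $\tau_{Q_1,Q_2}$ is well defined as a map $\Z \to \Z$: for fixed $i$, the set $\{j \mid (i,j)\in H(Q_1,Q_2)\}$ is bounded below by $-i$ (by \cref{rem:tau}), and it is nonempty because Riemann--Roch guarantees, for $m$ large enough that $\deg((i-1)Q_1+mQ_2) > 2g(\chi)-2$, a strict inclusion $L((i-1)Q_1+mQ_2) \subsetneq L(iQ_1+mQ_2)$; any $f$ in the difference lies in $R(Q_1,Q_2)$ with $v_{Q_1}(f)=-i$. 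Hence the minimum defining $\tau_{Q_1,Q_2}(i)$ exists.

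The core step is a cancellation lemma: if $(i_1,j)$ and $(i_2,j)$ both lie in $H(Q_1,Q_2)$ with $i_1<i_2$, then there is some $j'<j$ with $(i_2,j')\in H(Q_1,Q_2)$. To prove this I would pick $f_1,f_2\in R(Q_1,Q_2)$ realizing these pole orders, so $v_{Q_2}(f_1)=v_{Q_2}(f_2)=-j$. Since $Q_2$ is rational, its residue field is $\F_q$, so I can choose $c\in\F_q^\times$ making the leading coefficients at $Q_2$ cancel, giving $v_{Q_2}(f_2-cf_1)>-j$. At $Q_1$ the two pole orders differ, $v_{Q_1}(f_2)=-i_2<-i_1=v_{Q_1}(f_1)$, so no cancellation occurs there and $v_{Q_1}(f_2-cf_1)=-i_2$. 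Thus $f_2-cf_1\in R(Q_1,Q_2)\setminus\{0\}$ witnesses $(i_2,-v_{Q_2}(f_2-cf_1))\in H(Q_1,Q_2)$ with second coordinate strictly below $j$.

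From this lemma the proposition follows quickly. For injectivity, if $\tau_{Q_1,Q_2}(i_1)=\tau_{Q_1,Q_2}(i_2)=j$ with $i_1\ne i_2$, the lemma produces a point $(\max\{i_1,i_2\},j')\in H(Q_1,Q_2)$ with $j'<j$, contradicting minimality of $\tau_{Q_1,Q_2}(\max\{i_1,i_2\})=j$. For the inverse relation, I would first note that unwinding the definitions gives $\tau_{Q_2,Q_1}(j)=\min\{i\mid (i,j)\in H(Q_1,Q_2)\}$, since $(j,i)\in H(Q_2,Q_1)$ exactly when $(i,j)\in H(Q_1,Q_2)$; this is the same set $H(Q_1,Q_2)$, now minimizing the first coordinate. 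So if $j=\tau_{Q_1,Q_2}(i)$, then $(i,j)\in H(Q_1,Q_2)$ forces $\tau_{Q_2,Q_1}(j)\le i$, and if some $(i',j)\in H(Q_1,Q_2)$ had $i'<i$, the lemma would again produce $(i,j')\in H(Q_1,Q_2)$ with $j'<j$, contradicting $j=\tau_{Q_1,Q_2}(i)$. Hence $\tau_{Q_2,Q_1}(j)=i$, i.e.\ $\tau_{Q_2,Q_1}\circ\tau_{Q_1,Q_2}=\mathrm{id}_\Z$. Swapping the roles of $Q_1$ and $Q_2$ gives $\tau_{Q_1,Q_2}\circ\tau_{Q_2,Q_1}=\mathrm{id}_\Z$, so both maps are bijections and mutually inverse.

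The only delicate point is the cancellation lemma, and there the essential ingredient is that $Q_2$ is a rational place, so that leading coefficients at $Q_2$ live in $\F_q$ and can be matched by a single scalar $c$; I expect that to be the main thing to state carefully. Everything else is bookkeeping with valuations, and the symmetry argument then handles surjectivity without any separate effort.
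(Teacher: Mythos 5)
Your proof is correct, but note that the paper itself gives no proof of this proposition: it is imported verbatim from \cite[Prop.\,14]{BT}, so there is no internal argument to compare against. Your route is the standard one and it holds up: well-definedness of the minimum follows exactly as you say (bounded below via \cref{rem:tau}, nonempty via a Riemann--Roch jump $L((i-1)Q_1+mQ_2)\subsetneq L(iQ_1+mQ_2)$ for $m$ large, where any $f$ in the difference necessarily has $v_{Q_1}(f)=-i$ exactly); and your cancellation lemma is sound, with the two delicate points handled correctly --- the scalar $c$ exists because $Q_2$ is a \emph{rational} place, so the leading coefficients of $f_1,f_2$ with respect to a local parameter at $Q_2$ lie in $\F_q$, and $f_2-cf_1\neq 0$ is guaranteed since its $Q_1$-valuation equals $-i_2$ by the strict ultrametric inequality (the valuations $-i_1\neq -i_2$ differ, so no cancellation at $Q_1$). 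The identification $\tau_{Q_2,Q_1}(j)=\min\{i\mid (i,j)\in H(Q_1,Q_2)\}$ is immediate from $R(Q_1,Q_2)=R(Q_2,Q_1)$, and your two applications of the lemma then give $\tau_{Q_2,Q_1}\circ\tau_{Q_1,Q_2}=\mathrm{id}_\Z$ and, by symmetry, the other composition, which yields bijectivity and the inverse relation in one stroke. One small remark: your separate injectivity argument is redundant, since a map with a two-sided inverse is automatically bijective; it does no harm, but the proof closes without it.
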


By the definitions of $\tau_{Q_1,Q_2}$ and $H(Q_1,Q_2)$, for all $i \in \Z$ there exists a function
$f_{Q_1,Q_2}^{(i)} \in R(Q_1,Q_2)$ such that $v_{Q_1}(f_{Q_1,Q_2}^{(i)})=-i$ and
$v_{Q_2}(f_{Q_1,Q_2}^{(i)})=-\tau_{Q_1,Q_2}(i).$ Since $\tau_{Q_1,Q_2}$ is a bijection, the functions $f_{Q_1,Q_2}^{(i)}$ have distinct pole orders at $Q_1$ as well as $Q_2$.

\begin{theorem}\label{thm:dimL}
Let $Q_1,Q_2$ be two distinct rational places of $\chi$ and
$a_1,a_2 \in \N$.
The Riemann--\?Roch space $L(a_1Q_1 + a_2Q_2)$ has dimension
$|\{i \le a_1 \mid \tau_{Q_1,Q_2}(i) \le a_2\}|$ and basis
\[
  \{f_{Q_1,Q_2}^{(i)} \mid i \le a_1 \makebox{ and } \tau_{Q_1,Q_2}(i) \le a_2\}.
\]
\end{theorem}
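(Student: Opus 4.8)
The plan is to show that the proposed set $B := \{f_{Q_1,Q_2}^{(i)} \mid i \le a_1 \text{ and } \tau_{Q_1,Q_2}(i) \le a_2\}$ is a basis of $L(a_1Q_1 + a_2Q_2)$; the dimension count is then immediate from the cardinality of $B$. First I would verify that each element of $B$ lies in the space. If $i \le a_1$ and $\tau_{Q_1,Q_2}(i) \le a_2$, then by the defining property of the functions $f_{Q_1,Q_2}^{(i)}$ we have $v_{Q_1}(f_{Q_1,Q_2}^{(i)}) = -i \ge -a_1$ and $v_{Q_2}(f_{Q_1,Q_2}^{(i)}) = -\tau_{Q_1,Q_2}(i) \ge -a_2$. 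Since $f_{Q_1,Q_2}^{(i)} \in R(Q_1,Q_2)$ has no poles outside $\{Q_1,Q_2\}$, these two valuation inequalities are exactly the conditions $(f_{Q_1,Q_2}^{(i)}) + a_1Q_1 + a_2Q_2 \ge 0$, so $f_{Q_1,Q_2}^{(i)} \in L(a_1Q_1 + a_2Q_2)$.

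Next I would establish linear independence of $B$. The crucial structural fact, already recorded in the excerpt, is that the functions $f_{Q_1,Q_2}^{(i)}$ have pairwise distinct pole orders at $Q_1$, since $i \mapsto v_{Q_1}(f_{Q_1,Q_2}^{(i)}) = -i$ is injective. A standard valuation argument then applies: any nontrivial $\F_q$-linear combination $\sum_i c_i f_{Q_1,Q_2}^{(i)}$ has valuation at $Q_1$ equal to $\min\{-i \mid c_i \ne 0\}$, because the summand with the most negative valuation (equivalently the largest pole order) cannot be cancelled by the others. In particular such a combination is nonzero, which gives linear independence.

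The main obstacle, and the heart of the argument, is \emph{spanning}: every $f \in L(a_1Q_1 + a_2Q_2) \setminus \{0\}$ must be expressible as a combination of elements of $B$. I would argue by induction on the pole order $-v_{Q_1}(f)$. Given such an $f$, set $i_0 := -v_{Q_1}(f)$, so $i_0 \le a_1$. Since $f \in R(Q_1,Q_2)$ with $v_{Q_1}(f) = -i_0$, minimality in the definition of $\tau_{Q_1,Q_2}$ forces $\tau_{Q_1,Q_2}(i_0) \le v_{Q_2}(-f)^{-1}$—more precisely, $(i_0, -v_{Q_2}(f)) \in H(Q_1,Q_2)$ yields $\tau_{Q_1,Q_2}(i_0) \le -v_{Q_2}(f) \le a_2$, so $f_{Q_1,Q_2}^{(i_0)} \in B$. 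Now $f$ and $f_{Q_1,Q_2}^{(i_0)}$ share the pole order $i_0$ at $Q_1$, so I can choose a scalar $c \in \F_q^\times$ cancelling the leading terms: the difference $f - c\, f_{Q_1,Q_2}^{(i_0)}$ has strictly smaller pole order at $Q_1$. I must then check that this difference still lies in $L(a_1Q_1 + a_2Q_2)$; this holds because both $f$ and $c\,f_{Q_1,Q_2}^{(i_0)}$ belong to that space, which is an $\F_q$-vector space. By the induction hypothesis the difference is a combination of elements of $B$, and adding back $c\,f_{Q_1,Q_2}^{(i_0)}$ expresses $f$ in terms of $B$. The base case is $f = 0$. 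This completes the proof that $B$ is a basis, and the dimension formula follows.
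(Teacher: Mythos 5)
Your proof is correct, but it runs in the opposite direction from the paper's. You construct the basis directly: membership in $L(a_1Q_1+a_2Q_2)$, linear independence from the distinct pole orders at $Q_1$, and spanning by a leading-term-cancellation induction, with the dimension formula falling out as the cardinality of the basis. The paper instead computes the dimension first, via the filtration $L(a_1Q_1 + a_2Q_2) \supseteq L((a_1-1)Q_1 + a_2Q_2) \supseteq \cdots \supseteq L(-(a_2+1)Q_1+a_2Q_2)=\{0\}$: the dimension jumps by one at step $i$ exactly when some function has pole divisor $iQ_1 + jQ_2$ with $j \le a_2$, i.e., exactly when $\tau_{Q_1,Q_2}(i) \le a_2$; the claimed basis is then a set of $\ell(a_1Q_1+a_2Q_2)$ linearly independent functions inside the space, so spanning never has to be argued at all. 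Your route is more constructive---it is essentially an elimination algorithm expressing any $f$ in terms of the $f^{(i)}_{Q_1,Q_2}$---and it avoids the jump-counting; the paper's route is shorter because spanning comes for free from the dimension count. Two small repairs to yours: (i) the well-foundedness of your induction needs the pole orders at $Q_1$ of nonzero elements of the space to be bounded below, which follows from \cref{rem:tau}: any nonzero $f$ in the space lies in $R(Q_1,Q_2)$, so $(-v_{Q_1}(f)) + (-v_{Q_2}(f)) \ge 0$, hence $-v_{Q_1}(f) \ge v_{Q_2}(f) \ge -a_2$, and a strictly decreasing sequence of such pole orders must terminate; (ii) the garbled inequality ``$\tau_{Q_1,Q_2}(i_0) \le v_{Q_2}(-f)^{-1}$'' should simply be deleted, since the precise statement you give right after it, namely $(i_0,-v_{Q_2}(f)) \in H(Q_1,Q_2)$ and therefore $\tau_{Q_1,Q_2}(i_0) \le -v_{Q_2}(f) \le a_2$, is the correct one.
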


\begin{proof}
Consider the filtration of $\F$-vector spaces:
\[L(a_1Q_1 + a_2Q_2) \supseteq L((a_1-1)Q_1 + a_2Q_2) \supseteq \dots \supseteq L(-a_2Q_1 + a_2Q_2) \supseteq L(-(a_2+1)Q_1+a_2Q_2)=\{0\}.\]
For $-a_2 \le i \le a_1$, the strict inequality $\ell(iQ_1 + a_2Q_2)>\ell((i-1)Q_1 + a_2Q_2)$ holds if and only if there exists a function $f \in \F_q(\chi)$ such that
$(f)_\infty = iQ_1 + jQ_2$ with $j \le a_2$. Such a function exists if and only if
$\tau_{Q_1,Q_2}(i) \le a_2$. Hence, $\ell(a_1Q_1 + a_2Q_2) = |\{ -a_2 \le i \le a_1 \mid \tau_{Q_1,Q_2}(i) \le a_2\}|$.
Since $\tau_{Q_1,Q_2}(i) \ge -i$, we see that $\ell(a_1Q_1 + a_2Q_2) = |\{i \le a_1 \mid \tau_{Q_1,Q_2}(i) \le a_2\}|$ as was claimed.

A basis for $L(a_1Q_1+a_2Q_2)$ can be directly derived from the above, since the set $$\{f_{Q_1,Q_2}^{(i)} \mid i \le a_1 \text{ and }\tau_{Q_1,Q_2}(i) \le a_2\}$$ is a subset of $L(a_1Q_1+a_2Q_2)$ consisting of $\ell(a_1Q_1 + a_2Q_2)$ linearly independent functions. Note that the linear independence follows from the fact the functions have mutually distinct pole orders at $Q_1$.
\end{proof}
A direct corollary is an explicit description of the $(a_1Q_1+a_2Q_2)$-gaps and non-gaps at $Q_1$.
\begin{corollary}\label{cor:Gnongaps}
Let $G=a_1Q_1+a_2Q_2$. Then the set of $G$-non-gaps at $Q_1$ is given by
\[\{ a \in \Z \mid \tau_{Q_1,Q_2}(a) \le a_2\}\]
and the set of $G$-non-gaps at $Q_2$ is given by
\[\{ b \in \Z \mid \tau_{Q_1,Q_2}^{-1}(b) \le a_1\}.\]
\end{corollary}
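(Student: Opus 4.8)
The plan is to unwind the definition of the $G$-non-gaps at $Q_1$, identify the space $L(G+\infty Q_1)$ explicitly via \cref{thm:dimL}, and then read off the attainable pole orders at $Q_1$ from the functions $f_{Q_1,Q_2}^{(k)}$.

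First I would use \cref{rem:2.2} to write $L(G+\infty Q_1)=\bigcup_{i\in\Z}L((a_1+i)Q_1+a_2Q_2)$, an increasing union of subspaces of $\F_q(\chi)$. For each fixed $i$, \cref{thm:dimL} supplies the basis $\{f_{Q_1,Q_2}^{(k)} \mid k\le a_1+i \text{ and } \tau_{Q_1,Q_2}(k)\le a_2\}$. As $i$ grows the constraint $k\le a_1+i$ becomes vacuous: for any $k$ with $\tau_{Q_1,Q_2}(k)\le a_2$, choosing $i\ge k-a_1$ places $f_{Q_1,Q_2}^{(k)}$ in $L((a_1+i)Q_1+a_2Q_2)\subseteq L(G+\infty Q_1)$. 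Hence $L(G+\infty Q_1)$ is spanned by $\{f_{Q_1,Q_2}^{(k)} \mid \tau_{Q_1,Q_2}(k)\le a_2\}$.

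Next I would translate this into a statement about pole orders. By construction $f_{Q_1,Q_2}^{(k)}$ has pole order exactly $k$ at $Q_1$, and since $\tau_{Q_1,Q_2}$ is a bijection these pole orders are pairwise distinct. Every element of $L(G+\infty Q_1)$ lies in some finite-dimensional $L((a_1+i)Q_1+a_2Q_2)$ and is therefore a finite $\F_q(\chi)$-linear combination of the spanning functions; because their pole orders at $Q_1$ are distinct, no cancellation of leading terms can occur, so the pole order of a nonzero combination equals the largest index $k$ with nonzero coefficient. Consequently the set of pole orders realized by nonzero elements of $L(G+\infty Q_1)$ is exactly $\{k\in\Z \mid \tau_{Q_1,Q_2}(k)\le a_2\}$, which is by definition $H(Q_1;G)$. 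This establishes the first claim.

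Finally, the description of the $G$-non-gaps at $Q_2$ follows by the same argument after interchanging the roles of $Q_1$ and $Q_2$ (writing $G=a_2Q_2+a_1Q_1$): it yields $\{b\in\Z \mid \tau_{Q_2,Q_1}(b)\le a_1\}$, and substituting $\tau_{Q_2,Q_1}=\tau_{Q_1,Q_2}^{-1}$ gives the stated form. The only step requiring real care is the passage from each finite-dimensional $L((a_1+i)Q_1+a_2Q_2)$ to the infinite union together with the distinct-pole-orders argument, making sure that this limit neither creates nor destroys any attainable pole order at $Q_1$; I expect this to be the main, though rather mild, obstacle.
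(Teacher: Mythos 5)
Your proposal is correct and follows essentially the same route as the paper: the paper's (very terse) proof likewise invokes \cref{thm:dimL}, considers the basis of $L(aQ_1+a_2Q_2)$ as $a$ tends to infinity, reads off the attainable pole orders at $Q_1$ from the functions $f_{Q_1,Q_2}^{(k)}$ (whose valuations at $Q_1$ are pairwise distinct, so no cancellation of leading terms occurs), and then swaps the roles of $Q_1$ and $Q_2$ using $\tau_{Q_2,Q_1}=\tau_{Q_1,Q_2}^{-1}$. The only blemish is the phrase ``finite $\F_q(\chi)$-linear combination,'' which should read ``finite linear combination over the constant field,'' but this is a typo rather than a gap.
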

\begin{proof}
The first part follows directly from the previous theorem by considering basis of $L(aQ_1+a_2Q_2)$ for $a$ tending to infinity. Reversing the roles of $Q_1$ and $Q_2$, the second part follows.
\end{proof}
This corollary implies that for $G=a_1Q_1+a_2Q_2$, it is not hard to compute the $G$-gaps at either $Q_1$ or $Q_2$ once the bijection $\tau_{Q_1,Q_2}$ can be computed efficiently. We show in an example that this does occur in a particular case. Moreover, in the next section we will give a very explicit description of $\tau_{Q_1,Q_2}$ for a family of function fields and pairs of rational points $Q_1$ and $Q_2$.

\begin{example}
The Hermitian curve $\mathcal H$ is the curve defined over $\F_{q^2}$ by the equation $x^q+x=y^{q+1}$. The corresponding function field $\F_{q^2}(\mathcal H)$ is called the Hermitian function field. For any two distinct rational places $Q_1$ and $Q_2$ of $\F_{q^2}(\mathcal H)$, the map $\tau_{Q_1,Q_2}$ satisfies $\tau_{Q_1,Q_2}(i)=-iq$ for $q \le i \le 0$. Since furthermore $\tau_{Q_1,Q_2}(i+q+1)=\tau_{Q_1,Q_2}(i)-(q+1)$ for any $i \in \Z$, this describes $\tau_{Q_1,Q_2}$ completely. See \cite{BT} for more details. This example also appears as a special case in the next section.
\end{example}

\section{The generalized Giulietti--\?Korchm\'aros function field}
Let $e \ge 1$ be an odd integer. We consider the generalized Giulietti--\?Korchm\'aros (GK) curve $\chi_e$, also known as the Garcia--\?G\" uneri--\?Stichtenoth curve \cite{GGS}. It is defined over the finite field $\F_{q^{2e}}$ by the equations $$x^q+x=y^{q+1} \;\; \text{and} \;\; z^{\frac{q^e+1}{q+1}}=y^{q^2}-y.$$ This is a maximal curve when considered over the finite field $\F_{q^{2e}}$. Indeed, its genus and number of rational points are
\begin{align*}
  g(\chi_e) & := (q-1)(q^{e+1}+q^e-q^2)/2 , \\
  N_e       & := q^{2e+2}-q^{e+3}+q^{e+2}+1.
\end{align*}
As before, we will use the language of function fields and denote the corresponding function field $\F_{q^{2e}}(\chi_e)$ as the generalized GK function field. For $e=1$ one simply obtains the Hermitian function field $\F_{q^2}(\mathcal H)$, while for $e=3$, one obtains what is known as the Giulietti--\?Korchm\'aros function field \cite{GK}.

The function $x \in \F_{q^{2e}}(\chi_e)$ has exactly one zero and one pole, which we will denote by $Q_0$ and $Q_\infty$ respectively. The functions $y$ and $z$ also have a pole at $Q_\infty$ only. For a given rational place $P$ of $\F_{q^{2e}}(\chi_e)$ different from $Q_\infty$, we call $(x(P),y(P),z(P)) \in \F_{q^{2e}}^3$ the coordinates of $P$. For the function field $\F_{q^{2e}}(\chi_e)$, rational places are uniquely determined by their coordinates. A place with coordinates $(a,b,c) \in \F_{q^{2e}}^3$ will be denoted by $P_{(a,b,c)}$. In particular, we have $Q_0=P_{(0,0,0)}$.

With these notations, we can express the divisors of $x,y$ and $z$ as follows:
\begin{align*}
  (x) &= (q^e+1)(Q_0-Q_\infty), & \\
  (y) &= \sum_{\substack{a\in\F_q \\ a^{q}+a=0}}\frac{q^e+1}{q+1}P_{(a,0,0)} \ - \ q\frac{q^e+1}{q+1}Q_\infty, & \\
  (z) &= \sum_{\substack{(a,b) \in \F_{q^2} \\ a^q+a=b^{q+1}}}P_{(a,b,0)} \ - \ q^3Q_\infty.
\end{align*}
In each summation, the point $P_{(0,0,0)} = Q_0$ occurs. For future reference we also note that for $k \in \Z$ and $\ell\ge 0, m\ge 0$ we have
\begin{equation}\label{eq:divxyz}
(x^k y^\ell z^m)= \left(k(q^e+1)+\ell\frac{q^e+1}{q+1}+m\right)Q_0-\left( k(q^e+1)+\ell q\frac{q^e+1}{q+1}+mq^3\right)Q_\infty +E,
\end{equation}
with $E$ an effective divisor with support disjoint from $\{Q_0,Q_\infty\}.$ The above information is enough to determine that $H(Q_\infty),$ the Weierstrass semigroup of $Q_\infty$, is generated by $q^3, q\frac{q^e+1}{q+1}$ and $q^e+1$. 
\begin{theorem}[\cite{GOS}, Cor.3.5]\label{thm:GOS}
  We have $H(Q_\infty) = \displaystyle{\left\langle q^3, \, q\frac{q^e+1}{q+1}, \, q^e+1 \right\rangle}$.
\end{theorem}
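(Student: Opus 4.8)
The plan is to prove the two inclusions between the numerical semigroup $S := \langle q^3,\, q\tfrac{q^e+1}{q+1},\, q^e+1\rangle$ and $H(Q_\infty)$ by a gap\nobreakdash-counting argument. The inclusion $S \subseteq H(Q_\infty)$ is immediate from the divisor computation in \eqref{eq:divxyz}: for $k,\ell,m \ge 0$ the function $x^k y^\ell z^m$ has its only pole at $Q_\infty$, of order $k(q^e+1) + \ell\, q\tfrac{q^e+1}{q+1} + m q^3$. As $(k,\ell,m)$ ranges over $\N^3$ these pole orders run over all of $S$, and each such order lies in $H(Q_\infty)$ by definition. Hence $S \subseteq H(Q_\infty)$.

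For the reverse inclusion I would reduce to counting gaps. Both $S$ and $H(Q_\infty)$ are submonoids of $\N$ whose complements in $\N$ are finite. As recalled after the definition of $G$\nobreakdash-non-gaps (the case $G=0$), the Weierstrass semigroup $H(Q_\infty)$ has exactly $g(\chi_e)$ gaps. Thus it suffices to show that $S$ also has exactly $g(\chi_e)$ gaps: since $S \subseteq H(Q_\infty)$ gives $\N \setminus H(Q_\infty) \subseteq \N \setminus S$, an equality of cardinalities between these two finite sets forces $\N \setminus H(Q_\infty) = \N \setminus S$, and therefore $S = H(Q_\infty)$.

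To compute the number of gaps of $S$ I would exhibit it as a free (telescopic) numerical semigroup. Write $\mu := \tfrac{q^e+1}{q+1}$, so the generators are $q^3$, $q\mu$, and $(q+1)\mu$. Since $\mu \equiv 1 \pmod q$ (the alternating expansion $\mu = q^{e-1}-q^{e-2}+\cdots-q+1$ has constant term $1$), we have $\gcd(q,\mu)=1$, and the partial gcd's along the increasing arrangement of the generators are $q$ and then $1$ in every case (the relative order of $q^3$ and $q\mu$ depends on whether $q^2 < \mu$, i.e. on $e$, but this does not affect the gcd chain). The freeness conditions then reduce to checking that the relevant multiples are multiples of an earlier generator, namely $q^3\mu \in \langle q^3\rangle$ and $(q+1)(q\mu) \in \langle q\mu\rangle$; both are clear. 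Applying the standard genus formula for free numerical semigroups and substituting $\mu = \tfrac{q^e+1}{q+1}$, the sum telescopes to $(q-1)(q+1)(q^e+1) - q^3 + 1$ over $2$, which simplifies to $\tfrac{q-1}{2}\bigl(q^{e+1}+q^e-q^2\bigr) = g(\chi_e)$.

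The main obstacle is exactly this last genus computation. One must correctly identify the free structure of $S$ and, in particular, handle the $e$\nobreakdash-dependent ordering of the generators $q^3$ and $q\mu$, then push the algebraic simplification through to the closed form $g(\chi_e)$; verifying the freeness conditions and the gcd chain uniformly in $e$ is where care is needed. By contrast, the inclusion coming from pole orders and the final cardinality argument are routine.
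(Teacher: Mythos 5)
Your proof is correct, but it takes a route the paper does not: the paper offers no proof of \cref{thm:GOS} at all, simply citing \cite{GOS} after observing, via \eqref{eq:divxyz}, exactly your easy inclusion --- that every element of $\left\langle q^3,\, q\frac{q^e+1}{q+1},\, q^e+1 \right\rangle$ is a pole number at $Q_\infty$. Your argument upgrades that observation to a self-contained proof, and the details check out. Writing $\mu = \frac{q^e+1}{q+1}$, the gcd chain for the arrangement $(q^3,\, q\mu,\, (q+1)\mu)$ is $(q^3,\, q,\, 1)$ because $\mu \equiv 1 \pmod q$ (this is where $e$ odd enters), so the multipliers are $e_2 = q^2$ and $e_3 = q$; the freeness conditions $q^2\cdot q\mu \in \langle q^3\rangle$ and $q\cdot (q+1)\mu = (q+1)\cdot q\mu \in \langle q\mu\rangle$ hold; and the genus formula for free semigroups gives $\frac{1}{2}\bigl((q^2-1)(q^e+1)-q^3+1\bigr) = \frac{1}{2}\bigl(q^{e+2}-q^e-q^3+q^2\bigr) = g(\chi_e)$, exactly as you claim, so the cardinality argument closes the proof. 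Your caution about the $e$-dependent relative size of $q^3$ and $q\mu$ is also resolved correctly: the Frobenius and genus formulas for free numerical semigroups are valid for any arrangement of the generators satisfying the freeness conditions, not only increasing ones, so the argument is uniform in $e$ (one can also check directly that the other arrangement yields the same value). As for what each approach buys: the paper's citation keeps the exposition short and defers to the function-field-specific analysis in \cite{GOS}, while your proof is self-contained, using only facts already present in the paper (\eqref{eq:divxyz} and the Riemann--Roch consequence $|\Gamma(Q_\infty)| = g(\chi_e)$) together with the standard theory of free (telescopic) numerical semigroups.
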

%
A direct consequence of this theorem is a description of $\Gamma(Q_\infty)$, the set of gaps of $H(Q_\infty)$.
\begin{corollary}\label{cor:gammainf}
The set $\Gamma(Q_\infty)$ of gaps of $H(Q_\infty)$ is given by
\begin{align*}
  \biggl\{ k(q^e+1)+\ell & q\frac{q^e+1}{q+1}+mq^3 \mid \\
     & 0 \le \ell \le q, 0 \le m < \frac{q^e+1}{q+1}, k<0, k(q^e+1)+\ell q\frac{q^e+1}{q+1}+mq^3 \ge 0 \biggr\}.
\end{align*}
\end{corollary}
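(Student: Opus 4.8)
The plan is to show that the three generators $q^e+1$, $q\tfrac{q^e+1}{q+1}$, and $q^3$ admit a convenient normal form for the semigroup they generate, from which the gaps can be read off directly. Write $r := \tfrac{q^e+1}{q+1}$ (a positive integer since $e$ is odd), and abbreviate the generators as $c := q^e+1 = (q+1)r$, $b := q\tfrac{q^e+1}{q+1} = qr$, and $a := q^3$. A direct computation gives the two relations $(q+1)b = qc$ and $q^2 b = ra$, which will drive the whole argument. Recall also from \cref{rem:2.2} that $\Gamma(Q_\infty) = \Z_{\ge 0}\setminus H(Q_\infty)$, so by \cref{thm:GOS} it suffices to describe membership in $H(Q_\infty) = \langle a,b,c\rangle$ precisely.

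First I would establish that every integer $n$ has a \emph{unique} representation $n = kc + \ell b + ma$ with $k \in \Z$, $0 \le \ell \le q$ and $0 \le m < r$. Since the number of admissible pairs $(\ell,m)$ equals $(q+1)r = c$, it is enough to check that the values $\ell b + ma$ are pairwise incongruent modulo $c$; they then form a complete residue system, and $k$ is forced by $n \bmod c$. For the incongruence, suppose $\delta b + \varepsilon a \equiv 0 \pmod c$ with $|\delta| \le q$ and $|\varepsilon| < r$. Writing $\delta b + \varepsilon a = q(\delta r + \varepsilon q^2)$ and using $\gcd(q,c) = 1$ (which follows from $q^e+1 \equiv 1 \pmod q$), this reduces to $c \mid \delta r + \varepsilon q^2$. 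Reducing modulo $r$, and noting $r \mid c$, gives $r \mid \varepsilon q^2$; since $\gcd(q,r) = 1$ and $|\varepsilon| < r$ this forces $\varepsilon = 0$, after which $(q+1) \mid \delta$ together with $|\delta| \le q$ forces $\delta = 0$.

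It then remains to prove that $n \in H(Q_\infty)$ if and only if the canonical coefficient $k$ is nonnegative. One direction is immediate, since $k \ge 0$ makes $kc+\ell b + ma$ a nonnegative combination of generators. For the converse I would take any expression $n = \alpha a + \beta b + \gamma c$ with $\alpha,\beta,\gamma \ge 0$ and normalize it using the two relations: first repeatedly apply $ra = q^2 b$ to bring $\alpha$ below $r$, then repeatedly apply $(q+1)b = qc$ to bring $\beta$ down to at most $q$. Both moves keep the coefficients nonnegative, the first strictly decreases $\alpha$ while the second leaves $\alpha$ untouched, so the process terminates at a triple with $0 \le \alpha < r$, $0 \le \beta \le q$, $\gamma \ge 0$. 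By the uniqueness just proved this normalized triple is the canonical one, whence $k = \gamma \ge 0$. Combining the two parts, the gaps are exactly those $n \ge 0$ whose canonical $k$ is negative, i.e.\ the set $\{kc + \ell b + ma \mid 0 \le \ell \le q,\ 0 \le m < r,\ k < 0,\ kc+\ell b + ma \ge 0\}$, which is the claimed description.

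I expect the pairwise incongruence modulo $c$ to be the main obstacle, since it is where the specific arithmetic of the three generators and the coprimalities $\gcd(q,r) = 1$ and $\gcd(q,q+1) = 1$ are essential; once the normal form is in place, both the termination of the reduction and the final bookkeeping are routine.
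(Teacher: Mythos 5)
Your proposal is correct and takes essentially the same route as the paper: its proof rests on exactly the two facts you establish, namely that every integer has a unique representation $k(q^e+1)+\ell q\frac{q^e+1}{q+1}+mq^3$ with $0 \le \ell \le q$ and $0 \le m < \frac{q^e+1}{q+1}$, and that such an integer lies in $H(Q_\infty)$ precisely when $k \ge 0$ — though the paper asserts both without proof. Your residue-system argument for uniqueness and your normalization argument for the membership criterion simply supply the details the paper leaves to the reader.
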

\begin{proof}
Any integer can uniquely be written in the form $k(q^e+1)+\ell q\frac{q^e+1}{q+1}+mq^3$, with $k,\ell$ and $m$ integers satisfying  $0 \le \ell \le q, 0 \le m < \frac{q^e+1}{q+1}$. To be an element of $H(Q_\infty)$ the additional requirement is simply that $k \ge 0$. Since $\Gamma(Q_\infty)=\mathbb{N}\setminus H(Q_\infty)$, the corollary follows.
\end{proof}

We now give a further consequence of Theorem \ref{thm:GOS}: a complete description of the ring of functions that are regular outside $Q_\infty$; that is to say, the functions having no poles except possibly at $Q_\infty$. The next result follows directly from the similar statement in \cite[Prop.\,3.4]{GOS}.
\begin{corollary}\label{cor:R1}
The ring $R(Q_\infty)$ of functions in $\F_{q^{2e}}(\chi_e)$ regular outside $Q_\infty$ is given by $\F_{q^{2e}}[x,y,z].$
\end{corollary}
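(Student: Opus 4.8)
The plan is to show that $\F_{q^{2e}}[x,y,z]$ and $R(Q_\infty)$ are equal by proving the two inclusions separately, with the second inclusion being the substantive one. First I would establish the easy containment $\F_{q^{2e}}[x,y,z] \subseteq R(Q_\infty)$. Each of the functions $x$, $y$, and $z$ has $Q_\infty$ as its only pole, as recorded in the displayed divisors of $x$, $y$, and $z$ preceding \cref{thm:GOS}. Since $R(Q_\infty)$ is a ring (sums and products of functions whose only pole is at $Q_\infty$ again have no pole outside $Q_\infty$), any polynomial expression in $x,y,z$ over $\F_{q^{2e}}$ lies in $R(Q_\infty)$. This handles one direction with no real work.

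For the reverse inclusion $R(Q_\infty) \subseteq \F_{q^{2e}}[x,y,z]$, the key idea is to exploit the fact, established via \cref{thm:GOS}, that the pole orders at $Q_\infty$ realized by the monomials $x^k y^\ell z^m$ generate the full semigroup $H(Q_\infty) = \langle q^3,\, q\tfrac{q^e+1}{q+1},\, q^e+1\rangle$. Concretely, take an arbitrary nonzero $f \in R(Q_\infty)$ and argue by induction on the pole order $-v_{Q_\infty}(f) \in H(Q_\infty)$. Since this pole order lies in the numerical semigroup generated by $q^3$, $q\tfrac{q^e+1}{q+1}$, and $q^e+1$, there exist nonnegative integers $\ell, m$ and an integer $k\ge 0$ with $-v_{Q_\infty}(f) = k(q^e+1) + \ell\, q\tfrac{q^e+1}{q+1} + m q^3$, and by \eqref{eq:divxyz} the monomial $x^k y^\ell z^m$ has exactly this pole order at $Q_\infty$ (its divisor has support at $Q_0$, $Q_\infty$, and effective part $E$ disjoint from both, but crucially no pole away from $Q_\infty$). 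Choosing a scalar $\lambda \in \F_{q^{2e}}$ matching leading coefficients, the difference $f - \lambda x^k y^\ell z^m$ again lies in $R(Q_\infty)$ and has strictly smaller pole order at $Q_\infty$. Repeating, we subtract off monomials until we reach a constant, expressing $f$ as a polynomial in $x,y,z$.

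The main obstacle is making the inductive step fully rigorous: one must confirm that subtracting the monomial $\lambda x^k y^\ell z^m$ genuinely cancels the top pole order at $Q_\infty$ and does not introduce a pole anywhere else. The first point follows because both $f$ and the monomial have the same pole order $-v_{Q_\infty}(f)$ at $Q_\infty$, so after scaling by the appropriate leading coefficient the valuation of the difference at $Q_\infty$ strictly increases; one should note that the new pole order still lies in $H(Q_\infty)$, since the difference is again in $R(Q_\infty)$, so the induction stays within the same semigroup and the process terminates (the pole orders form a strictly decreasing sequence of nonnegative integers). The second point is exactly where the effective divisor $E$ in \eqref{eq:divxyz} matters: since $E$ has support disjoint from $\{Q_0, Q_\infty\}$ and is effective, the monomial $x^k y^\ell z^m$ has no pole outside $Q_\infty$, so the subtraction keeps us inside $R(Q_\infty)$. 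In fact, rather than re-deriving all of this, I would simply cite \cite[Prop.\,3.4]{GOS}, which gives the analogous statement, and note that \cref{thm:GOS} supplies precisely the semigroup generation needed to transfer that result to our setting; the corollary then follows directly.
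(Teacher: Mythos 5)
Your proposal is correct, and it is more self-contained than what the paper actually does: the paper offers no argument at all for \cref{cor:R1}, simply asserting that it ``follows directly from the similar statement in \cite[Prop.\,3.4]{GOS}'' --- which is precisely the fallback you offer in your last sentence. The substance of your proposal, the descent on pole orders, is the standard argument and it goes through: every nonzero $f \in R(Q_\infty)$ has $-v_{Q_\infty}(f) \in H(Q_\infty)$, \cref{thm:GOS} lets you write this pole order as $k(q^e+1)+\ell q\frac{q^e+1}{q+1}+mq^3$ with $k,\ell,m \ge 0$, and \eqref{eq:divxyz} shows $x^k y^\ell z^m$ realizes exactly that pole order while lying in $R(Q_\infty)$ (the coefficient of $Q_0$ is nonnegative because $k \ge 0$, and $E$ is effective). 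Since $Q_\infty$ is a rational place, the leading coefficients in the local expansion at $Q_\infty$ lie in $\F_{q^{2e}}$, so the cancellation scalar $\lambda$ exists; the difference stays in $R(Q_\infty)$, its pole order strictly drops, and the descent terminates at a function with no poles, which is constant. The only caveat worth flagging is that your self-contained route leans on \cref{thm:GOS}, whose proof in the paper is itself a citation to \cite{GOS}, so you have not eliminated the dependence on that reference --- you have only isolated it. What the two approaches buy: the paper's citation is a one-liner, but your argument makes explicit that the corollary needs nothing beyond the semigroup generators and \eqref{eq:divxyz}, both of which are already displayed in the paper, and it is the same mechanism the paper later reuses (in the proof of \cref{thm:fi}) to pin down the functions $f^{(i)}_{0,\infty}$.
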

For the AG codes that we wish to study, we in fact need to understand a larger ring of functions, allowing functions that may have a pole in $Q_\infty$ as well as $Q_0$. An explicit description of this ring is given in the following corollary.
\begin{corollary}\label{cor:R2}
The ring $R(Q_0,Q_\infty)$ of functions in $\F_{q^{2e}}(\chi_e)$ regular outside $\{Q_0,Q_\infty\}$ is given by $\F_{q^{2e}}[x,x^{-1},y,z].$
\end{corollary}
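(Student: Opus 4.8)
The plan is to show the two inclusions $R(Q_0,Q_\infty) \supseteq \F_{q^{2e}}[x,x^{-1},y,z]$ and $R(Q_0,Q_\infty) \subseteq \F_{q^{2e}}[x,x^{-1},y,z]$. The first inclusion is the easy direction: by \Cref{cor:R1} we already know that $x,y,z \in R(Q_\infty) \subseteq R(Q_0,Q_\infty)$, and from the divisor $(x)=(q^e+1)(Q_0-Q_\infty)$ in the displayed divisor computations we see that $x^{-1}$ has its only pole at $Q_0$, hence $x^{-1}\in R(Q_0,Q_\infty)$ as well. Since $R(Q_0,Q_\infty)$ is a ring and contains all four generators, it contains the subring they generate, giving $\supseteq$.

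For the reverse inclusion, I would take an arbitrary $f \in R(Q_0,Q_\infty)$, so that $f$ has poles only possibly at $Q_0$ and $Q_\infty$. The idea is to clear the pole at $Q_0$ by multiplying by a suitable power of $x$: since $v_{Q_0}(x)=q^e+1>0$, for $N$ large enough the function $x^N f$ has no pole at $Q_0$, while multiplying by $x$ does not introduce a pole at $Q_\infty$ beyond what is already permitted (indeed $x$ has a pole only at $Q_\infty$, so $x^N f$ still has its only possible pole at $Q_\infty$). Thus $x^N f \in R(Q_\infty) = \F_{q^{2e}}[x,y,z]$ by \Cref{cor:R1}. Dividing back, $f = x^{-N}\cdot(x^N f) \in \F_{q^{2e}}[x,x^{-1},y,z]$, which is exactly the inclusion $\subseteq$ we want.

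The only point requiring a little care is the choice of $N$ and the verification that multiplication by $x$ behaves as claimed at both places. Concretely, if $-v_{Q_0}(f)$ denotes the pole order of $f$ at $Q_0$, one should take $N$ so that $N(q^e+1) \ge -v_{Q_0}(f)$, i.e. $N \ge \lceil -v_{Q_0}(f)/(q^e+1)\rceil$, so that $v_{Q_0}(x^N f) = N(q^e+1)+v_{Q_0}(f) \ge 0$. At $Q_\infty$ we use $v_{Q_\infty}(x)=-(q^e+1)<0$, which only deepens an existing pole and never creates a pole at a place outside $\{Q_0,Q_\infty\}$, so the support condition defining $R(Q_\infty)$ is preserved. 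I do not expect a genuine obstacle here; the whole statement is essentially a localization argument: $R(Q_0,Q_\infty)$ is the localization of $R(Q_\infty)=\F_{q^{2e}}[x,y,z]$ at the multiplicative set generated by $x$ (whose only zero and pole away from $Q_\infty$ is at $Q_0$), and the explicit divisor of $x$ is what makes this localization equal to $\F_{q^{2e}}[x,x^{-1},y,z]$.
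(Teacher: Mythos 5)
Your proof is correct and takes essentially the same route as the paper: the containment $\supseteq$ is read off from the divisors of $x,y,z$, and the reverse inclusion is obtained by multiplying $f$ by a power of $x$ to kill the pole at $Q_0$ and then invoking \cref{cor:R1} to conclude $x^N f \in \F_{q^{2e}}[x,y,z]$. Your explicit choice of $N$ and the closing remark that this is a localization of $R(Q_\infty)$ at the powers of $x$ are fine elaborations of the same argument.
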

\begin{proof}
It is clear from \cref{eq:divxyz} that any function in $\F_{q^{2e}}[x,x^{-1},y,z]$ is regular outside $\{Q_0,Q_\infty\}$. Conversely, if a function $f$ has no pole outside $\{Q_0,Q_\infty\}$, then for a suitably chosen exponent $k$, the function $x^kf$ has no pole outside $Q_\infty$. Hence $x^kf \in R(Q_\infty)$. \Cref{cor:R1} implies that $f \in \F_{q^{2e}}[x,x^{-1},y,z].$
\end{proof}

\Cref{cor:R2} implies that the ring $R(Q_0,Q_\infty)$ has a natural module structure over $\F_{q^{2e}}[x,x^{-1}]$. When viewed as such a module, $R(Q_0,Q_\infty)$ is free of rank $q^e+1$ with basis $y^\ell z^m$ where $0 \le \ell <q+1$ and $0 \le m < \frac{q^e+1}{q+1}$. For $e=1$, the above theorem and the mentioned consequences are well known. For $e=3$, these results are contained in \cite{GK,D}.

We now turn to the study of the two-point Weierstrass semigroup $H(Q_0,Q_\infty).$ We will determine this semigroup completely. \Cref{eq:divxyz} will be used to describe the functions $f^{(i)}_{Q_0,Q_\infty}$, resp.~the bijection $\tau_{Q_0,Q_\infty}.$ For convenience, we will use the more compact notation $f^{(i)}_{0,\infty}$, resp.~$\tau_{0,\infty}$. Similarly we write $\tau_{0,\infty}^{-1}=\tau_{\infty,0}$.
%

\begin{theorem}\label{thm:fi}
Let $i \in \Z$ and write $i=-k(q^e+1)-\ell\frac{q^e+1}{q+1}-m$ for a triple $(k,\ell,m) \in \Z^3$ satisfying $0 \le \ell < q+1$ and $0 \le m < \frac{q^e+1}{q+1}$. Then $f_{0,\infty}^{(i)}=x^ky^\ell z^m.$
\end{theorem}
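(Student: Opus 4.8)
The plan is to exhibit $x^ky^\ell z^m$ as a legitimate choice for $f_{0,\infty}^{(i)}$ by verifying its two defining properties directly: that it has pole order $i$ at $Q_0$, and that its pole order at $Q_\infty$ is the least one compatible with this, i.e.\ equals $\tau_{0,\infty}(i)$. Throughout I write $a=q^e+1$, $b=q\tfrac{q^e+1}{q+1}$, $c=q^3$ for the pole orders of $x,y,z$ at $Q_\infty$, noting that at $Q_0$ these functions vanish to orders $a$, $\tfrac{q^e+1}{q+1}$, $1$ respectively.

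First I would record that the triple $(k,\ell,m)$ is uniquely determined by $i$: since $\tfrac{q^e+1}{q+1}\cdot(q+1)=a$, the quantity $ka+\ell\tfrac{q^e+1}{q+1}+m$ with $0\le\ell<q+1$ and $0\le m<\tfrac{q^e+1}{q+1}$ is simply a mixed-radix expansion of $-i$, so $(k,\ell,m)$ is well defined and unique. Applying \cref{eq:divxyz} (legitimate because $\ell,m\ge 0$) then yields at once $v_{Q_0}(x^ky^\ell z^m)=ka+\ell\tfrac{q^e+1}{q+1}+m=-i$ and $v_{Q_\infty}(x^ky^\ell z^m)=-(ka+\ell b+mc)$, the effective divisor $E$ there being supported away from $Q_0$ and $Q_\infty$. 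In particular $x^ky^\ell z^m\in R(Q_0,Q_\infty)$ realises the pair $(i,ka+\ell b+mc)\in H(Q_0,Q_\infty)$, so $\tau_{0,\infty}(i)\le ka+\ell b+mc$; it then remains only to prove the reverse inequality.

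The crucial lemma I would isolate is that the monomials $x^{k'}y^{\ell'}z^{m'}$, ranging over $k'\in\Z$, $0\le\ell'<q+1$, $0\le m'<\tfrac{q^e+1}{q+1}$, have pairwise distinct valuations at $Q_0$ and, separately, pairwise distinct valuations at $Q_\infty$. Distinctness at $Q_0$ is again the uniqueness of the mixed-radix expansion above, whereas distinctness at $Q_\infty$ is exactly the uniqueness of the representation $k'a+\ell'b+m'c$ (in the same ranges, as $0\le\ell'<q+1$ coincides with $0\le\ell'\le q$) established in the proof of \cref{cor:gammainf}. Granting this, I would take an arbitrary $f\in R(Q_0,Q_\infty)$ with $v_{Q_0}(f)=-i$; by \cref{cor:R2}, $f$ is a finite $\F_{q^{2e}}$-linear combination of these monomials. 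Distinctness of the $Q_0$-valuations forces $v_{Q_0}(f)$ to equal the least $Q_0$-valuation among the monomials actually occurring (no cancellation of the term of least valuation), and by uniqueness of the expansion the unique monomial attaining this value $-i$ must be $x^ky^\ell z^m$ itself; hence it occurs in $f$ with nonzero coefficient. Distinctness of the $Q_\infty$-valuations then forces $v_{Q_\infty}(f)$ to equal the most negative $Q_\infty$-valuation among the occurring monomials, which is at most $-(ka+\ell b+mc)$ since $x^ky^\ell z^m$ is one of them. Thus every such $f$ has pole order at least $ka+\ell b+mc$ at $Q_\infty$, giving $\tau_{0,\infty}(i)\ge ka+\ell b+mc$; combined with the previous paragraph this is equality, so $x^ky^\ell z^m$ has precisely the two pole orders defining $f_{0,\infty}^{(i)}$.

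I expect the main obstacle to be this lemma on simultaneous distinctness, and specifically its $Q_\infty$-part: it is not a formal consequence of the $Q_0$-expansion but relies on the genuinely arithmetic fact that $a=q^e+1$, $b=q\tfrac{q^e+1}{q+1}$, $c=q^3$ admit unique representations in the stated ranges, which is where \cref{thm:GOS} and \cref{cor:gammainf} actually enter. A secondary point needing care is the ``no cancellation'' step: one must observe that the divisor $E$ in \cref{eq:divxyz} is irrelevant at $Q_0$ and $Q_\infty$ (being supported elsewhere), so that the valuations of the monomials at these two places are read off directly from their exponents and the ultrametric inequality becomes an equality exactly because the extremal term is unique.
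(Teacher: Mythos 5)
Your proof is correct and follows essentially the same route as the paper's: expand an element of $R(Q_0,Q_\infty)$ in the monomial spanning set from \cref{cor:R2} and exploit that these monomials have pairwise distinct pole orders at $Q_0$ and, separately, at $Q_\infty$ (the two uniqueness-of-representation facts). The only difference is organizational --- the paper applies this expansion to the minimizer $f^{(i)}_{0,\infty}$ and derives a contradiction if its leading monomials at the two places differ, whereas you bound $\tau_{0,\infty}(i)$ from above and below directly --- but the mathematical content is identical.
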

\begin{proof}
By definition of $f^{(i)}_{0,\infty}$ we have $-v_{Q_0}(f_i)=i$ and $v_{Q_\infty}(f_i)=\tau_{0,\infty}(i).$ Suppose $f^{(i)}_{0,\infty}$ cannot be chosen as a monomial in $x^{-1},x,y$ and $z$. Since by \cref{cor:R2} we have $f^{(i)}_{0,\infty} \in \F_{q^{2e}}[x,x^{-1},y,z]$ we can write $$f^{(i)}_{0,\infty}=\sum_{\alpha=-N}^M\sum_{\beta=0}^q \sum_{\gamma=0}^{\frac{q^e-q}{q+1}} a_{\alpha\beta\gamma}x^\alpha y^\beta z^\gamma,$$ for integers $N,M$ and constants $a_{k\ell m} \in \F_{q^{2e}}.$ Note that the pole orders at $Q_0$ of each of the occurring monomials $x^\alpha y^\beta z^\gamma$ are distinct. Since $-v_{Q_0}(f^{(i)}_{0,\infty})=i$, this implies that there exists a uniquely determined triple $(k,\ell,m)$ such that $a_{k \ell m} \neq 0$ and $i=-k(q^e+1)-\ell\frac{q^e+1}{q+1}-m$, while for all other monomials $x^\alpha y^\beta z^\gamma$ occurring in $f^{(i)}_{0,\infty}$ we have $$-\alpha(q^e+1)-\beta\frac{q^e+1}{q+1}-\gamma < i.$$

Likewise, the pole orders at $Q_\infty$ of all of the occurring monomials $x^\alpha y^\beta z^\gamma$ are distinct. Since $-v_{Q_\infty}(f^{(i)}_{0,\infty})=\tau_{0,\infty}(i)$ there exists a uniquely determined triple $(k',\ell',m')$ such that $a_{k' \ell' m'} \neq 0$ and $\tau_{0,\infty}(i)=k'(q^e+1)+\ell'q\frac{q^e+1}{q+1}+m'q^3$, while for all other monomials $x^\alpha y^\beta z^\gamma$ occurring in $f^{(i)}_{0,\infty}$ we have $$\alpha(q^e+1)+\beta q\frac{q^e+1}{q+1}+\gamma q^3< \tau_{0,\infty}(i).$$

If $(k,\ell,m) \neq (k',\ell',m')$, the monomial $x^k y^\ell z^m$ would have pole order $i$ in $Q_0$, but pole order strictly less than $\tau_{0,\infty}(i)$ in $Q_\infty$, which gives a contradiction by the definition of $\tau_{0,\infty}$. Hence we may take $f^{(i)}_{0,\infty}=x^k y^\ell z^m.$
\end{proof}

\begin{corollary}\label{cor:tau}
Let $i \in \Z$, and let $(k,\ell,m)\in \Z^3$ be the unique triple such that $0 \le \ell <q+1$, $0 \le m < \frac{q^e+1}{q+1}$ and $i=-k(q^e+1)-\ell\frac{q^e+1}{q+1}-m$. Then $$\tau_{0,\infty}(i)=k(q^e+1)+\ell q\frac{q^e+1}{q+1}+mq^3.$$
\end{corollary}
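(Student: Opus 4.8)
The plan is to observe that this corollary follows almost immediately from Theorem \ref{thm:fi} together with the divisor formula \eqref{eq:divxyz}, so that essentially all the real work has already been carried out. The starting point is the defining property of the functions $f^{(i)}_{0,\infty}$ recalled just before Theorem \ref{thm:dimL}: one has $v_{Q_\infty}(f^{(i)}_{0,\infty}) = -\tau_{0,\infty}(i)$, so that $\tau_{0,\infty}(i)$ is by definition the pole order of $f^{(i)}_{0,\infty}$ at $Q_\infty$.

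First I would fix $i \in \Z$ and record that there is a unique triple $(k,\ell,m) \in \Z^3$ with $0 \le \ell < q+1$, $0 \le m < \frac{q^e+1}{q+1}$, and $i = -k(q^e+1) - \ell\frac{q^e+1}{q+1} - m$. This is the same decomposition already used in Corollary \ref{cor:gammainf} and in the proof of Theorem \ref{thm:fi}; uniqueness is a short division-with-remainder argument, since $q^e+1 = (q+1)\frac{q^e+1}{q+1}$ forces $m \equiv -i \pmod{\frac{q^e+1}{q+1}}$, then $\ell$ is determined modulo $q+1$, and finally $k$ is determined. With this triple in hand, Theorem \ref{thm:fi} identifies $f^{(i)}_{0,\infty} = x^k y^\ell z^m$. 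It then remains only to read off the pole order of this monomial at $Q_\infty$, and \eqref{eq:divxyz} gives the coefficient of $Q_\infty$ in $(x^k y^\ell z^m)$ as $-\bigl(k(q^e+1) + \ell q\frac{q^e+1}{q+1} + m q^3\bigr)$. Combining this with the defining property above yields $\tau_{0,\infty}(i) = -v_{Q_\infty}(f^{(i)}_{0,\infty}) = k(q^e+1) + \ell q\frac{q^e+1}{q+1} + m q^3$, which is exactly the claimed formula.

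I do not anticipate any genuine obstacle in this argument. The substantive point---that $f^{(i)}_{0,\infty}$ can be taken to be the single monomial $x^k y^\ell z^m$ rather than some more complicated element of $\F_{q^{2e}}[x,x^{-1},y,z]$---is precisely what Theorem \ref{thm:fi} supplies, and its proof already handled the matching of pole orders at $Q_0$ and $Q_\infty$. The only step needing a moment of care is confirming that the triple $(k,\ell,m)$ is well defined, which is the routine change-of-representation computation indicated above.
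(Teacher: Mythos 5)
Your proof is correct and follows essentially the same route as the paper: invoke Theorem \ref{thm:fi} to identify $f^{(i)}_{0,\infty}=x^k y^\ell z^m$, then read off the pole order at $Q_\infty$ from \eqref{eq:divxyz} and use $\tau_{0,\infty}(i)=-v_{Q_\infty}(f^{(i)}_{0,\infty})$. Your added remark on the uniqueness of the triple $(k,\ell,m)$ is a minor elaboration the paper leaves implicit, but it does not change the argument.
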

\begin{proof}
For a given $i \in \Z$, the proof of \cref{thm:fi} implies that $f^{(i)}_{0,\infty}=x^k y^\ell z^m$ for a uniquely determined triple $(k,\ell,m) \in \Z^3$ such that $-i=k(q^e+1)+\ell\frac{q^e+1}{q+1}+m$, $0 \le \ell \le q$ and $0 \le m < \frac{q^e+1}{q+1}.$ Hence $\tau_{0,\infty}(i)=-v_{Q_\infty}(f_i)=k(q^e+1)+\ell q\frac{q^e+1}{q+1}+mq^3$ as claimed.
\end{proof}

It is interesting to see what can be said about the Weierstrass semigroups $H(Q_0)$ and $H(Q_\infty)$ using the above tools. First of all, it should be noted that for $e=1$ and $e=3$, it is well known that $H(Q_0)=H(Q_\infty)$. The reason is that there exists an automorphism interchanging $Q_0$ to $Q_\infty$. For $e>3$, the place $Q_\infty$ is fixed by any automorphism of $\chi_e$ and in fact $H(Q_0)$ and $H(Q_\infty)$ were shown to be distinct in \cite{GOS}. However, for any $e \ge 1$ the points of the form $P_{(a,b,0)}$ fall within the same orbit under the action of the subgroup of the automorphism group of $\chi_e$ consisting of automorphisms fixing $Q_\infty$. This means that later on in the article, one can always exchange the point $Q_0$ with any point of the form $P_{(a,b,0)}$.

It is easy to describe the set $\Gamma(Q_0)$, but it should first be noted that the precise structure of $H(Q_0)$ (and hence of $\Gamma(Q_0)$) has already been determined in \cite{BMZ}. For the sake of completeness and since our description of $\Gamma(Q_0)$ is rather compact, we give the following corollary.
\begin{corollary}
The set $\Gamma(Q_0)$ of gaps of the Weierstrass semigroup $H(Q_0)$ of the point $Q_0$ on $\chi_e$ is given by
\begin{align*}
  \biggl\{-k(q^e+1)-\ell & \frac{q^e+1}{q+1}-m \mid \\
  & 0 \le \ell \le q, 0 \le m < \frac{q^e+1}{q+1}, k<0, k(q^e+1)+\ell q\frac{q^e+1}{q+1}+mq^3 \ge 0\biggr\}.
\end{align*}
\end{corollary}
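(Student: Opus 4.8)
The plan is to derive this exactly as \Cref{cor:gammainf} was derived from the description of $H(Q_\infty)$, but now using the explicit bijection $\tau_{0,\infty}$ of \Cref{cor:tau} in place of the generators of $H(Q_\infty)$. The first step is to realise $H(Q_0)$ as a level set of $\tau_{0,\infty}$. Applying \Cref{cor:Gnongaps} with $Q_1=Q_0$, $Q_2=Q_\infty$ and $G=0$ (that is, $a_1=a_2=0$), and recalling that the $0$-non-gaps at $Q_0$ are precisely the Weierstrass semigroup $H(Q_0)$, yields
\[ H(Q_0)=\{a\in\Z\mid \tau_{0,\infty}(a)\le 0\}. \]
Taking complements inside $\mathbb N=\Z_{\ge 0}$ then gives $\Gamma(Q_0)=\{a\ge 0\mid \tau_{0,\infty}(a)> 0\}$.

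Next I would substitute the closed form for $\tau_{0,\infty}$. By \Cref{cor:tau}, every $a\in\Z$ has a unique expression $a=-k(q^e+1)-\ell\frac{q^e+1}{q+1}-m$ with $0\le\ell\le q$ and $0\le m<\frac{q^e+1}{q+1}$, and for this triple $\tau_{0,\infty}(a)=k(q^e+1)+\ell q\frac{q^e+1}{q+1}+mq^3$. It then remains to check that, in terms of $(k,\ell,m)$, the conditions ``$a\ge 0$ and $\tau_{0,\infty}(a)>0$'' defining $\Gamma(Q_0)$ are equivalent to the conditions ``$k<0$ and $\tau_{0,\infty}(a)\ge 0$'' appearing in the statement. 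For one direction, if $k<0$ then $\ell\frac{q^e+1}{q+1}+m\le q^e$ forces $a\ge q^e+1-q^e=1>0$; since $a\ne 0=\tau_{0,\infty}(0)$ and $\tau_{0,\infty}$ is injective, $\tau_{0,\infty}(a)\ne 0$, so $\tau_{0,\infty}(a)\ge 0$ already implies $\tau_{0,\infty}(a)>0$. For the converse, $a\ge 0$ forces $k\le 0$, and $k=0$ would give $a=-\ell\frac{q^e+1}{q+1}-m\le 0$, hence $a=0$ and $\tau_{0,\infty}(a)=0$, contradicting $\tau_{0,\infty}(a)>0$; thus $k<0$. This shows the two parametrisations describe the same set, which is the claim.

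The argument is essentially bookkeeping once \Cref{cor:Gnongaps,cor:tau} are in hand, so I do not expect a serious obstacle. The only point needing care is the final matching of conditions: one must observe that the constraint $k<0$ already forces $a\ge 1$, and that the single boundary value $\tau_{0,\infty}(a)=0$ occurs only at $a=0$ by bijectivity of $\tau_{0,\infty}$. These two observations are exactly what allow one to replace the natural conditions ``$a\ge 0$, $\tau_{0,\infty}(a)>0$'' by the cleaner pair ``$k<0$, $\tau_{0,\infty}(a)\ge 0$'' used in the statement.
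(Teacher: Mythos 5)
Your proof is correct, but it follows a genuinely different route from the paper's. The paper transports the known gap set at $Q_\infty$ to $Q_0$: it invokes the well-known fact that $\tau_{\infty,0}$ restricts to a bijection from $\Gamma(Q_\infty)$ onto $\Gamma(Q_0)$, takes the description of $\Gamma(Q_\infty)$ from Corollary~\ref{cor:gammainf} (itself a consequence of Theorem~\ref{thm:GOS}), and reads off the image under the explicit formula of Corollary~\ref{cor:tau} --- the parametrising conditions $k<0$, $k(q^e+1)+\ell q\frac{q^e+1}{q+1}+mq^3\ge 0$ are then inherited verbatim, so no further work is needed. You instead work entirely on the $Q_0$ side: Corollary~\ref{cor:Gnongaps} with $G=0$ gives $H(Q_0)=\{a\in\Z\mid \tau_{0,\infty}(a)\le 0\}$, complementation in $\N$ gives $\Gamma(Q_0)=\{a\ge 0\mid \tau_{0,\infty}(a)>0\}$, and you then translate these conditions into the $(k,\ell,m)$-parametrisation by hand. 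The price of your route is the boundary bookkeeping (showing ``$a\ge 0$, $\tau_{0,\infty}(a)>0$'' matches ``$k<0$, $\tau_{0,\infty}(a)\ge 0$''), which you carry out correctly, including the two delicate points: $k<0$ forces $a\ge 1$, and $\tau_{0,\infty}(a)=0$ only at $a=0$ by injectivity. What it buys is self-containedness: you use only results proved in the paper (Corollaries~\ref{cor:Gnongaps} and~\ref{cor:tau} plus bijectivity of $\tau_{0,\infty}$), avoiding both the external ``gaps map to gaps'' fact and any reliance on Corollary~\ref{cor:gammainf}; in particular your argument would still work verbatim if one only knew the two-point machinery and not the generators of $H(Q_\infty)$.
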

\begin{proof}
We denote by $\Gamma(Q_\infty)$ (resp. $\Gamma(Q_0)$) the set of gaps of $Q_\infty$ (resp. $Q_0$). It is well known that $\tau_{\infty,0}$ gives rise to a bijection from $\Gamma(Q_\infty)$ to $\Gamma(Q_0)$. Since by Corollary \ref{cor:gammainf} we have
\begin{align*}
 \Gamma(Q_\infty)= \biggl\{ k(q^e+1)+\ell & q\frac{q^e+1}{q+1}+mq^3 \mid \\
     & 0 \le \ell \le q, 0 \le m < \frac{q^e+1}{q+1}, k<0, k(q^e+1)+\ell q\frac{q^e+1}{q+1}+mq^3 \ge 0 \biggr\},
\end{align*}
\cref{cor:tau} implies that $\Gamma(Q_0)$ is as stated.
\end{proof}

\section{Two-point AG codes on the generalized GK curve.}

Since the curves $\chi_e$ are maximal, they are good candidates to be used for the construction of error-correcting codes. Let the divisor $D$ be the sum of all the rational points of $\chi_e$ different from $Q_0$ and $Q_\infty$. If the support of a divisor $G$ consists of one rational point not in $\mathrm{supp}(D)$, the code $C_L(D,G)$ is called a one-point AG code. Similarly, if $G=a_1Q_0+a_2Q_\infty$, the code $C_L(D,G)$ is called a two-point code. By slight abuse of notation, the dual of a one-point code (resp.~two-point code) are sometimes also called one-point (resp.~two-point) codes, but we will only use the terminology for the codes $C_L(D,G)$. The main reason we do this is that for any divisor $G$ with $\mathrm{supp}(G) \cap \mathrm{supp}(D) = \emptyset$, there exists a divisor $H$ with $\mathrm{supp}(H) \cap \mathrm{supp}(D) = \emptyset$ such that $C_L(D,G)^\perp=C_L(D,H)$, but even if the support of $G$ is small, the support of $H$ might be large. Therefore, in our sense of the word, $C_L(D,G)^\perp=C_L(D,H)$ may not be a one-point or two-point code, even if $C_L(D,G)$ is.

Duals of one-point codes with defining divisor of the form $aQ_\infty$ or $aQ_0$ on the generalized GK curves were investigated in \cite{FG,BMZ}. As we will see below, their analysis of the parameters of these codes has direct implications for the study of the one-point codes $C_L(D+Q_0,aQ_\infty)$ and $C_L(D+Q_0,aQ_0)$ themselves. Duals of two-point AG codes on the GK curve (i.e. $e=3$) have been studied in \cite{CT}. As we will see, their analysis can be refined significantly, yielding more excellent AG codes. Furthermore, the case $e>3$ will be considered.

The theorem used in \cite{CT} (which comes from \cite[Thm.\,2.1]{M}) allows one to improve the Goppa bound by one for the minimum distance of a nontrivial code defined on an algebraic curve $\chi$ of the form $C_L(D,(a_1+b_1-1)Q_1+(a_2+b_2-1)Q_2)$, where $Q_1$ and $Q_2$ are rational points not in $\mathrm{supp}(D)$. Here, the four nonnegative integers $a_1,a_2,b_1,b_2$ should satisfy
\begin{enumerate}
  \item $a_1 \ge 1$,
  \item $L((a_1-1)Q_1+a_2Q_2)=L(a_1Q_1+a_2Q_2)$,
  \item $(b_1,b_2-1-t) \in \Gamma(Q_1;Q_2)$ for all $t$ satisfying $0 \le t \le \min\{b_2-1,2g-1-a_1-a_2\}$.
\end{enumerate}
In the next theorem we show that the order bound in the same situation improves upon the Goppa bound by at least one as well. Therefore, our results will automatically include all results in \cite{CT} as a special case. First note that $L((a_1-1)Q_1+a_2Q_2)=L(a_1Q_1+a_2Q_2)$ is equivalent to saying that $\tau_{Q_1,Q_2}(a_1) > a_2$ by \cref{thm:dimL}. Further the condition that $(b_1,b_2-1-t) \in \Gamma(Q_1;Q_2)$ for all $t$ satisfying $0 \le t \le \min\{b_2-1,2g-1-a_1-a_2\}$ is equivalent to the statement that $\tau_{Q_1,Q_2}(b_1) \ge b_2$ or $\tau_{Q_1,Q_2}(b_1) < b_2-1 -\min\{b_2-1,2g-1-a_1-a_2\}.$ With these reformulations in mind, we now show that \cref{prop:B} implies \cite[Thm.\,2.1]{M}.

\begin{theorem}
Let $a_1,a_2,b_1,b_2$ be nonnegative integers and write $G:=(a_1+b_1-1)Q_1+(a_2+b_2-1)Q_2$. Further suppose that $\tau_{Q_1,Q_2}(a_1)>a_2$.
\begin{enumerate}
\item If $\tau_{Q_1,Q_2}(b_1) \ge b_2$, then $\nu(Q_1;(b_2-1)Q_2,a_2Q_2)>\deg(G)-2g(\chi)+2$.
\item If $\tau_{Q_1,Q_2}(b_1) < b_2-1-\min\{b_2-1,2g-a_1-a_2\}$, then $\nu(Q_2;b_1Q_1,(a_1-1)Q_1)>\deg(G)-2g(\chi)+2$.
\end{enumerate}
In particular, in either case the minimum distance of the code $C_L(D,G)^\perp$ is at least $\deg(G)-2g+3.$
\end{theorem}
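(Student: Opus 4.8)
The plan is to prove each of the two parts by computing a lower bound on the relevant $\nu$-quantity that is strictly larger than the Goppa bound $\deg(G)-2g(\chi)+2$, and then to observe that in each case the chosen $\nu$ feeds into \cref{prop:B} to yield the claimed minimum distance bound $\deg(G)-2g+3$. The crucial reformulation to exploit is \cref{lem:B}: writing $\nu(Q;F_1,F_2)$ as the coefficient of $t^{v_Q(F_1+F_2)+1}$ in the product $p_{Q;F_1}(t)\cdot p_{Q;F_2}(t)$ of the two non-gap generating series, I would revisit the proof of that lemma and track exactly \emph{when} the inequality $\nu \ge \deg(F_1+F_2)-2g+2$ is strict. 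The proof of \cref{lem:B} shows that the gap comes from the three ``correction'' Laurent series having nonnegative coefficients; the bound is strict precisely when at least one of these correction coefficients at the relevant power $t^{v_Q(F_1+F_2)+1}$ is positive. So my first step is to pin down that this extra contribution equals $1$ exactly under the stated hypotheses on $\tau_{Q_1,Q_2}$.

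For part (1), I would take $F_1=(b_2-1)Q_2$ and $F_2=a_2Q_2$ with $Q=Q_1$, so that $\deg(F_1+F_2)=(a_2+b_2-1)$ and the $Q_1$-non-gap sets $H(Q_1;F_1)$, $H(Q_1;F_2)$ are governed by \cref{cor:Gnongaps}, namely $a\in H(Q_1;F_i)$ iff $\tau_{Q_1,Q_2}(a)$ is at most the $Q_2$-coefficient of $F_i$. The hypothesis $\tau_{Q_1,Q_2}(a_1)>a_2$ says $a_1$ is an $F_2$-gap at $Q_1$, and $\tau_{Q_1,Q_2}(b_1)\ge b_2$ controls membership of $b_1$ in the relevant $F_1$-gap set; the idea is that the pair $(b_1,a_1)$ (summing to $b_1+a_1$) witnesses exactly one extra solution beyond the ``generic'' count, so the correction coefficient is $1$ and $\nu(Q_1;(b_2-1)Q_2,a_2Q_2)=\deg(G)-2g(\chi)+3$. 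Concretely I would identify, among the pairs $(i,j)$ with $i\in H(Q_1;F_1)$, $j\in H(Q_1;F_2)$ and $i+j=v_{Q_1}(F_1+F_2)+1$, the one forced extra pair arising from the gap at $a_1$, and match it against the correction term in the \cref{lem:B} expansion. Part (2) is the mirror image after swapping the roles of $Q_1$ and $Q_2$ via $\tau_{\infty,0}=\tau_{Q_1,Q_2}^{-1}$: here I take $F_1=b_1Q_1$, $F_2=(a_1-1)Q_1$, $Q=Q_2$, and the strengthened inequality $\tau_{Q_1,Q_2}(b_1)<b_2-1-\min\{b_2-1,2g-a_1-a_2\}$ is exactly what guarantees the analogous correction term survives at the required power of $t$.

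Finally, to conclude the minimum-distance statement, I would apply \cref{prop:B} with the computed $\nu$. In part (1), with $G=(a_1+b_1-1)Q_1+(a_2+b_2-1)Q_2$ decomposed so that $F_1+F_2=(b_2-1+a_2)Q_2$ and the ``missing'' part $(a_1+b_1-1)Q_1$ absorbed appropriately, the recursive structure described after \cref{prop:B} lets me peel off the place $Q_1$ and invoke that each relevant $\nu(Q^{(i)};\cdot,\cdot)$ is at least $\deg(G)-2g+3$: the key point is that the generic steps give the Goppa value while the single step governed by the hypotheses gives the strict improvement, and by Riemann--Roch the whole recursion terminates at the zero code. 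So $d(G)\ge \deg(G)-2g+3$, using that $\deg(F_1+F_2)+(\text{degree absorbed})=\deg(G)$ and that the minimum over the recursion is dominated by the strict bound.

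The main obstacle I anticipate is the bookkeeping in the first step: making precise exactly \emph{which} monomial in the expansion of \cref{lem:B} contributes the $+1$, and verifying that the two hypotheses on $\tau_{Q_1,Q_2}$ (the condition $\tau_{Q_1,Q_2}(a_1)>a_2$ together with the part-specific condition on $\tau_{Q_1,Q_2}(b_1)$) are exactly what force this contribution to be present without introducing a compensating negative term. The $\min\{b_2-1,2g-a_1-a_2\}$ appearing in part (2) strongly suggests that the boundary effects — where the finite correction series $q_{Q;F_i}(t)$ meet the support of the other factor — are where the delicate case analysis lives, and getting the inequality to land strictly rather than with equality will require careful attention to the endpoints of these ranges.
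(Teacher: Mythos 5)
You have the right overall mechanism, and it is the paper's: make the inequality of \cref{lem:B} strict by showing that the gap-series product $q_{Q;F_1}(t)\,q_{Q;F_2}(t)$ contributes at least $1$ to the coefficient of $t^{v_Q(F_1+F_2)+1}$, then conclude with \cref{prop:B} together with the Goppa bound applied to $C_L(D,G+Q)^\perp$. But your part (1) setup has a mismatch that you cannot leave as ``absorbed appropriately''. With $F_1=(b_2-1)Q_2$ and $F_2=a_2Q_2$, the quantity $\nu(Q_1;F_1,F_2)$ counts pairs summing to $v_{Q_1}(F_1+F_2)+1=1$, and \cref{lem:B} bounds it below only by $\deg(F_1+F_2)-2g(\chi)+2=a_2+b_2+1-2g(\chi)$, not by $\deg(G)-2g(\chi)+2$; your witness pair $(b_1,a_1)$ sums to $a_1+b_1\neq 1$, so it lives at the wrong exponent for these divisors. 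The repair, which is exactly what the paper does, is to take $F_1=(a_1-1)Q_1+a_2Q_2$ and $F_2=b_1Q_1+(b_2-1)Q_2$, so that $F_1+F_2=G$, and then invoke \cref{rem:2.2} to identify $\Gamma(Q_1;F_1)=\Gamma(Q_1;a_2Q_2)\ni a_1$ and $\Gamma(Q_1;F_2)=\Gamma(Q_1;(b_2-1)Q_2)\ni b_1$; now the gap pair $(a_1,b_1)$ sits at exponent $a_1+b_1=v_{Q_1}(G)+1$ and forces $\nu>\deg(G)-2g(\chi)+2$. With that correction, part (1) and your concluding use of \cref{prop:B} are sound.

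Part (2) is a genuine gap: it is not ``the mirror image after swapping the roles of $Q_1$ and $Q_2$''. The hypothesis $\tau_{Q_1,Q_2}(b_1)<b_2-1-\min\{b_2-1,2g-a_1-a_2\}$ does not hand you a gap at $Q_2$ playing the role that $b_1$ played at $Q_1$, and the extra contribution must be located by a different argument. The paper first shows that the case $\min\{b_2-1,2g-1-a_1-a_2\}=b_2-1$ cannot occur at all: it would force $\tau_{Q_1,Q_2}(b_1)<0$, contradicting $(b_1,0)\in H(Q_1,Q_2)$, which holds since $\tau_{Q_1,Q_2}(0)=0$ and $b_1\ge 0$. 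In the remaining case it exhibits an explicit gap pair at $Q_2$, namely $\bigl(a_2+b_2-\tau_{Q_1,Q_2}(a_1),\,\tau_{Q_1,Q_2}(a_1)\bigr)$: one has $\tau_{Q_1,Q_2}(a_1)\in\Gamma(Q_2;(a_1-1)Q_1)$ unconditionally by \cref{cor:Gnongaps}, while $a_2+b_2-\tau_{Q_1,Q_2}(a_1)\in\Gamma(Q_2;b_1Q_1)$ because the hypothesis makes every integer $b_2-1-t$ with $0\le t\le 2g-1-a_1-a_2$ a $b_1Q_1$-gap at $Q_2$, and \cref{rem:tau} (giving $\tau_{Q_1,Q_2}(a_1)\le 2g(\chi)-a_1$) together with $\tau_{Q_1,Q_2}(a_1)>a_2$ places $a_2+b_2-\tau_{Q_1,Q_2}(a_1)$ inside the window $[a_1+a_2+b_2-2g(\chi),\,b_2-1]$. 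None of this --- neither the vacuous case, nor the choice of pair, which involves $\tau_{Q_1,Q_2}(a_1)$ rather than any quantity symmetric to $b_1$ --- appears in your proposal; indeed your closing paragraph explicitly defers this case analysis. As written, the proposal establishes (after the fix above) only part (1).
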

\begin{proof}
  If $\tau_{Q_1,Q_2}(b_1) \ge b_2$, then $a_1 \in \Gamma(Q_1;a_2Q_2)$ and $b_1 \in \Gamma(Q_1;(b_2-1)Q_2)$. Combining Remark \ref{rem:2.2} with (the proof of) \cref{lem:B} we see that $\nu(Q_1;(a_1-1)Q_1+a_2Q_2,b_1Q_1+(b_2-1)Q_2)>\deg(G)-2g(\chi)+2.$ Indeed, the term $q_{Q_1,(a_1-1)Q_1+a_2Q_2}(t)q_{Q_1,b_1Q_1+(b_2-1)Q_2}(t)$ will contribute to the coefficient of $t^{v_Q(G)+1}$ with at least $1$. From \cref{prop:B} and the Goppa bound applied to $C_L(D,G+Q_1)^\perp$, we see that $C_L(D,G)$ has minimum distance at least $\deg(G)-2g+3.$

If $\tau_{Q_1,Q_2}(b_1) < b_2$ and $\min\{b_2-1,2g-1-a_1-a_2\}=b_2-1$, then we have $(b_1,b_2-1-t) \in \Gamma(Q_1;Q_2)$ by assumption for all $t$ satisfying $0 \le t \le b_2-1$. This implies that $\tau_{Q_1,Q_2}(b_1) <0$. However, since $\tau_{Q_1,Q_2}(0)=0$ and $b_1 \ge 0$, we see that $(b_1,0) \in H(Q_1,Q_2)$, giving a contradiction. This situation can therefore not occur.

If $\tau_{Q_1,Q_2}(b_1) < b_2$ and $\min\{b_2-1,2g-1-a_1-a_2\}=2g-1-a_1-a_2$, then similarly as before we have $\tau(b_1) < b_2-2g+a_1+a_2.$ This implies that $b_2-1-t \in \Gamma(Q_2;b_1Q_1)$ for all $t$ satisfying $0 \le t \le 2g-1-a_1-a_2.$ On the other hand, we have $\tau_{Q_1,Q_2}(a_1) \in \Gamma(Q_2;(a_1-1)Q_1).$ Now using \cref{rem:tau}, note that $$b_2-2g-a_1-a_2 \le a_2+b_2-\tau_{Q_1,Q_2}(a_1) \le b_2-1.$$ Hence $a_2+b_2-\tau_{Q_1,Q_2}(a_1) \in \Gamma(Q_2;b_1Q_1).$ The term $q_{Q_2,(a_1-1)Q_1+a_2Q_2}(t)q_{Q_2,b_1Q_1+(b_2-1)Q_2}(t)$ will then contribute to the coefficient of $t^{v_Q(G)+1}$ with at least $1$. Hence $\nu(Q_2;(a_1-1)Q_1+a_2Q_2,b_1Q_1+(b_2-1)Q_2)>\deg(G)-2g(\chi)+2$.
\Cref{prop:B} and the Goppa bound applied to $C_L(D,G+Q_2)^\perp$, imply that $C_L(D,G)^\perp$ has minimum distance at least $\deg(G)-2g+3.$
\end{proof}

With the above theorem in place, we could in principle start to compute our lower bound on the minimum distance of the duals of two-point codes. Before doing that, we show in the remainder of this section that duals of two-point codes on the generalized GK curve are closely related to two-point codes. This means that our bounds not only can be applied to the duals of two-point codes, but to two-point codes themselves as well. In order to do this, we need to understand the structure of the rational point of $\chi_e$. The structure of these points is described explicitly in \cite{ABQ,GOS}. Since $e$ is odd, we write $e=2t+1$ for some nonnegative integer $t$. Apart from $Q_\infty$, all rational points are of the form $P_{(a , b, c)}$. There are $q^3$ rational places of the form $P_{(a , b , 0)}$ and $q^3(q^e+1)(q^{e-1}-1)$ of the form $P_{(a , b , c)}$ with $c \neq 0$. Both for $c=0$ and $c \neq 0$, the place $P_{(a, b , c)}$ is unramified in the degree $q^3$ extension $\F_{q^{2e}}(\chi_e)/\F_{q^{2e}}(z)$ by \cite{ABQ,GOS}. This means that there are exactly $(q^e+1)(q^{e-1}-1)$ possible nonzero values of $c \in \F_{q^{2e}}$ giving rise to $q^3$ rational places of $\F_{q^{2e}}(\chi_e)$ if the form $P_{(a , b , c)}$. By \cite{ABQ} these values of $c$ are exactly the roots of the polynomial
$$f:=1+\sum_{i=0}^{t-1} z^{\frac{q^e+1}{q+1}\left(q^{2i+2}-1+q^e-q\right)}+\sum_{i=0}^{t-1} z^{\frac{q^e+1}{q+1}\left( q(q^{2i+2}-1)\right)}.$$
Denoting, as before, by $D$ the divisor which is the sum of all rational points distinct from $Q_0$ and $Q_\infty$, this implies that
\begin{equation}\label{eq:divz}
(zf)=Q_0+D-q^3(q^{2e-1}-q^e+q^{e-1})Q_\infty.
\end{equation}
This expression is very useful to determine whether or not two two-point codes are equal. This comes in very handy, when computing the order bound using \cref{prop:B}, since one should only apply this proposition if the codes $C_L(D,G+Q)$ and $C_L(D,G)$ are distinct. We give a criterion in the following lemma.
\begin{lemma}\label{lem:dimandeq}
Let $\chi_e$ be the generalized GK curve over $\F_{q^{2e}}$ and let the divisor $D$ be the sum of all its rational places different from $Q_0$ and $Q_\infty$. Further let $G=a_1Q_0+a_2Q_\infty$ and $Q \in \{Q_0,Q_\infty\}$. Then
$$\dim(C_L(D,G))=\dim(L(G)) - \dim(L(G+Q_0-q^3(q^{2e-1}-q^e+q^{e-1})Q_\infty)).$$
Furthermore $C_L(D,G+Q)=C_L(D,G)$ if and only if \begin{multline*}\dim(L(G+Q))-\dim(L(G+Q+Q_0-q^3(q^{2e-1}-q^e+q^{e-1})Q_\infty))=\\ \dim(L(G))-\dim(L(G+Q_0-q^3(q^{2e-1}-q^e+q^{e-1})Q_\infty)).\end{multline*}
\end{lemma}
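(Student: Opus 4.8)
The plan is to recognise that both claims rest on a single kernel computation, namely the kernel of the evaluation map $Ev_{\mathcal{P}}$ restricted to $L(G)$. First I would recall that by definition $C_L(D,G) = Ev_{\mathcal{P}}(L(G))$, so that $\dim(C_L(D,G)) = \dim(L(G)) - \dim(\ker)$, where the kernel consists of those $f \in L(G)$ that vanish at every place in the support of $D$. The main step is therefore to identify this kernel as a Riemann--Roch space. A function $f \in L(G)$ lies in the kernel precisely when $(f) + G \ge 0$ and additionally $f$ vanishes at every place of $D$, i.e. $(f) - D \ge -G$ outside the support of $D$; concretely the kernel equals $L(G-D)$.

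The key is now to rewrite $L(G-D)$ as a genuine two-point Riemann--Roch space using \cref{eq:divz}. Since $(zf) = Q_0 + D - q^3(q^{2e-1}-q^e+q^{e-1})Q_\infty$, multiplication by $zf$ gives an $\F_{q^{2e}}$-linear isomorphism between $L(G-D)$ and $L(G + (zf))$; the divisor $D$ is absorbed and what remains is supported only on $\{Q_0, Q_\infty\}$. Explicitly, $G - D$ is linearly equivalent to $G + Q_0 - q^3(q^{2e-1}-q^e+q^{e-1})Q_\infty$, so that $\dim(L(G-D)) = \dim(L(G+Q_0-q^3(q^{2e-1}-q^e+q^{e-1})Q_\infty))$. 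Substituting this back into the dimension count for $C_L(D,G)$ yields the first displayed formula. I expect the main obstacle to be bookkeeping: one must check that $\ker(Ev_{\mathcal{P}}|_{L(G)})$ really is all of $L(G-D)$ — that vanishing at the $n$ places of $D$ with the pole constraints from $G$ is exactly captured by the divisor inequality $(f) + G - D \ge 0$ — and that multiplication by $zf$ maps $L(G-D)$ bijectively onto the claimed two-point space without introducing or destroying poles at $Q_0$ or $Q_\infty$.

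For the second statement I would apply the first formula twice, once to $G$ and once to $G+Q$. Since $C_L(D,G) \subseteq C_L(D,G+Q)$ always holds (as $L(G) \subseteq L(G+Q)$), the two codes are equal if and only if they have the same dimension. Writing out $\dim(C_L(D,G+Q)) = \dim(C_L(D,G))$ using the first part and cancelling nothing but rearranging gives exactly the stated equality of the two differences of Riemann--Roch dimensions. This direction is essentially formal once the first formula is established, so the entire weight of the lemma lies in correctly identifying the kernel and exploiting \cref{eq:divz} to convert it into a two-point space whose dimension can later be computed via \cref{thm:dimL} and \cref{cor:tau}.
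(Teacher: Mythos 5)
Your proposal is correct and takes essentially the same route as the paper: identify the kernel of the evaluation map as $L(G-D)$, use the principal divisor $(zf)$ from \cref{eq:divz} to replace $G-D$ by the linearly equivalent two-point divisor $G+Q_0-q^3(q^{2e-1}-q^e+q^{e-1})Q_\infty$, and deduce the equality criterion by applying the resulting dimension formula to both $G$ and $G+Q$ together with the containment $C_L(D,G)\subseteq C_L(D,G+Q)$. (One trivial slip: the isomorphism $L(G-D)\cong L(G-D+(zf))$ is given by division by $zf$ rather than multiplication, but this does not affect the argument.)
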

\begin{proof}
First, note that $\dim(C_L(D,G))=\dim(L(G))-\dim(L(G-D)).$ Since $\dim(L(G-D))=\dim(L(G-D+(zf)))$, the first part of the lemma follows from \cref{eq:divz}. Now applying this formula to compute the dimension of $\dim(C_L(D,G+Q)),$ the lemma follows.
\end{proof}
Since we know the map $\tau_{0,\infty}$ explicitly, it is very easy to check the above criterion using \cref{thm:dimL}.

The function $zf$ from equation \eqref{eq:divz} is also useful when identifying dual two-point codes and two-point codes. The standard way to identify the dual of an AG code $C_L(D,G)^\perp$ with a code of the form $C_L(D,H)$ is to identify a differential on the curve with simple poles in all evaluation points and residues in these points equal to $1$. Equation \eqref{eq:divz} implies that the differential $\omega:=\frac{1}{fz}dz$ has simple poles and nonzero residue in all rational points of the form $P_{(a , b , c)}$. More precisely, using the defining equations of the curve $\chi_e$ and equation \eqref{eq:divz}, we obtain that
\begin{equation}\label{eq:divomega}
(\omega)=-Q_0-D+(q^{2e+2}-q^{e+3}+2q^{e+2}-q^e+q^2-1)Q_\infty.
\end{equation}
Since the differential $\omega$ has simple poles in all the points in $D$, its residues at those points will all be nonzero. However, it turns out that in general these residues are not all $1$. Nonetheless, we can identify an explicit relation between the class of codes $C_L(D,G)$ and $C_L(D,G)^\perp$. Two codes $C_1$ and $C_2$ are called equivalent up to column multipliers, which we denote by $C_1 \cong C_2$, if there exist nonzero elements $a_1,\dots,a_n$ such that the map $\phi: \F_{q^e} \to \F_{q^e}$ defined by $\phi(v_1,\dots,v_n)=(a_1v_1,\dots,a_nv_n)$ satisfies $\phi(C_1)=C_2$. Note that the basic parameters of such codes $C_1$ and $C_2$, such as the minimum distance, are the same.

\begin{proposition}\label{prop:dual}
Let $\chi_e$ be the generalized GK curve over $\F_{q^{2e}}$ and let the divisor $D$ be the sum of all its rational places different from $Q_0$ and $Q_\infty$. Further let $G=a_1Q_0+a_2Q_\infty$. Then $C_L(D,G)^\perp \cong C_L(D,H)$, where $$H=-(a_1+1)Q_0+(q^{2e+2}-q^{e+3}+2q^{e+2}-q^e+q^2-1-a_2)Q_\infty.$$
\end{proposition}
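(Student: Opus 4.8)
The plan is to identify the dual $C_L(D,G)^\perp$ with an AG code using the standard differential-based duality, and then account for the fact that the residues of our chosen differential $\omega$ are not all equal to $1$ by absorbing the discrepancy into column multipliers. The starting point is the general theory (as in \cite{S}): for a differential $\eta$ with simple poles at all points of $D$, one has $C_L(D,G)^\perp = C_\Omega(D,G) = C_L(D, (\eta)-D+G')$ for the appropriate divisor, and more concretely, if $\eta$ has residue $r_i \neq 0$ at $P_i$, then the dual is the image under evaluation of $L((\eta)+D-G)$ composed with scaling by the residues. I would take $\eta = \omega = \frac{1}{fz}\,dz$, whose divisor is computed in \cref{eq:divomega} as
\[(\omega)=-Q_0-D+(q^{2e+2}-q^{e+3}+2q^{e+2}-q^e+q^2-1)Q_\infty.\]

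The key computation is then purely divisor-theoretic. The duality statement gives $C_L(D,G)^\perp \cong C_L(D, (\omega)+D-G)$ up to the column multipliers coming from the residues $r_i$ of $\omega$ at the points of $D$; the equivalence $\cong$ up to column multipliers is exactly the relation defined just before the proposition, with $a_i = r_i^{-1}$ (or $r_i$, depending on convention). So I would substitute $G=a_1Q_0+a_2Q_\infty$ and compute
\[(\omega)+D-G = \bigl(-Q_0-D+(q^{2e+2}-q^{e+3}+2q^{e+2}-q^e+q^2-1)Q_\infty\bigr)+D-a_1Q_0-a_2Q_\infty.\]
The $D$ terms cancel, leaving exactly
\[H=-(a_1+1)Q_0+(q^{2e+2}-q^{e+3}+2q^{e+2}-q^e+q^2-1-a_2)Q_\infty,\]
which is the claimed divisor. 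This part is a routine matching of coefficients and should present no difficulty once \cref{eq:divomega} is in hand.

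The main obstacle is the bookkeeping around the residues: one must verify that $\omega$ genuinely has \emph{simple} poles with \emph{nonzero} residues at every point of $\mathrm{supp}(D)$, so that the dual really is equivalent (up to nonzero column multipliers) to a code of the form $C_L(D,H)$ rather than something with a larger or different defining divisor. This is precisely what \cref{eq:divz} secures: since $(zf)=Q_0+D-q^3(q^{2e-1}-q^e+q^{e-1})Q_\infty$ has $D$ appearing with coefficient exactly $1$ and with support disjoint from $\{Q_0,Q_\infty\}$, the logarithmic differential $\frac{d(zf)}{zf}$—and hence $\omega=\frac{dz}{fz}$ after accounting for the defining equations of $\chi_e$—has simple poles at each $P_i \in \mathrm{supp}(D)$ with residue equal to the coefficient of that point, which is nonzero. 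I would invoke this to guarantee all $r_i \neq 0$, note that the residues need not equal $1$ (hence only $\cong$, not equality), and thereby conclude $C_L(D,G)^\perp \cong C_L(D,H)$. The only care needed is to confirm that no point of $\mathrm{supp}(D)$ is a zero or higher-order pole of $\omega$, which again follows from the explicit divisor \cref{eq:divomega} since $D$ occurs there with coefficient exactly $-1$.
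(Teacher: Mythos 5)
Your proposal is correct and takes essentially the same route as the paper: both rest on the differential $\omega=\frac{1}{fz}\,dz$, the divisor computations \eqref{eq:divz} and \eqref{eq:divomega} (which give the simple poles, hence nonzero residues, at all points of $\mathrm{supp}(D)$), and the column-multiplier equivalence, after which $H=(\omega)+D-G$ is a one-line coefficient match. The only cosmetic difference is bookkeeping: the paper first multiplies $\omega$ by $h=(zf)'$ to get a differential $\eta=h\omega$ with all residues equal to $1$, obtaining the exact equality $C_L(D,G)^\perp=C_L(D,H')$ with $H'=D-G+(h)+(\omega)$, and then removes $(h)$ via the column multipliers $(h(P))_{P\in\mathrm{supp}(D)}$, whereas you invoke the residue-scaled form of duality directly with multipliers given by $\mathrm{res}_{P_i}(\omega)$ --- two interchangeable ways of handling the fact that the residues of $\omega$ are not all $1$.
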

\begin{proof}
Let $h:=(zf)'$ be the derivative of $zf$ with respect to the variable $z$. Then the differential $\eta=h \omega=(zf)'/(zf) dz$ has simple poles in all the rational points $P_{(a , b , c)}$ of $\chi_e$. Moreover, in each of those points, the residue of $\eta$ is equal to $1$. Therefore the standard theory of AG codes implies that $C_L(D,G)^\perp = C_L(D,H')$, with $H'=D-G+(\eta)=D-G+(h)+(\omega).$ Since $zf$ has simple roots only, its derivative $h$ is nonzero in $Q_0$ and the points in $D$. Hence the codes $C_L(D,H') \cong C_L(D,H)$, with $H=D-G+(\omega)$. Explicitly, the column multipliers are given by $(h(P))_{P \in \mathrm{supp}(D)}.$ Using \cref{eq:divomega}, the lemma follows.
\end{proof}

This proposition implies that the class of two-point codes $C_L(D,a_1Q_0+a_2Q_\infty)$ on the generalized GK curve is essentially the same as the class of codes of the form $C_L(D,a_1Q_0+a_2Q_\infty)^\perp$. In particular, the bounds on the minimum distance of codes of the form $C_L(D,a_1Q_0+a_2Q_\infty)^\perp$ will imply bounds for the minimum distance of codes of the form $C_L(D,a_1Q_0+a_2Q_\infty)$. Note that the above proof shows that $C_L(D,G)^\perp = C_L(D,H')$, where $$H'=-(a_1+1)Q_0+(q^{2e+2}-q^{e+3}+2q^{e+2}-q^e+q^2-1-a_2)Q_\infty+((zf)').$$ However, for our purposes this is less useful, since the divisor of $(zf)'$ may contains other points of $\chi_e$. Therefore $C_L(D,H')$ is in general not a two-point code, even if $C_L(D,G)$ is.

Using the same differential $\omega$ as above, we obtain the following corollary for one-point AG codes on the generalized GK curve.
\begin{corollary}
Let $\chi_e$ be the generalized GK curve over $\F_{q^{2e}}$ and let the divisor $D$ be the sum of all its rational places different from $Q_0$ and $Q_\infty$. Further let $G=aQ_\infty$. Then $C_L(D+Q_0,G)^\perp \cong C_L(D+Q_0,H)$, where $$H=(q^{2e+2}-q^{e+3}+2q^{e+2}-q^e+q^2-1-a)Q_\infty.$$
\end{corollary}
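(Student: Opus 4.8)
The final corollary is the one‑point analogue of Proposition~\ref{prop:dual}, obtained by moving $Q_0$ from the ``varying'' divisor $G$ into the fixed evaluation divisor $D$. So my plan is to reuse the differential $\omega = \frac{1}{fz}\,dz$ from equation \eqref{eq:divomega} and repeat the argument of Proposition~\ref{prop:dual}, but now with the evaluation divisor $D+Q_0$ in place of $D$. The key structural input is already recorded in \eqref{eq:divz}: the function $zf$ has $Q_0$ and every point in the support of $D$ as a simple zero, so $\omega$ has simple poles (and nonzero residues) precisely at $Q_0$ and at all the points of $D$ --- that is, at every point of the support of $D+Q_0$. This is exactly what the standard duality for AG codes requires in order to identify $C_L(D+Q_0,G)^\perp$ with another evaluation code.

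\textbf{Main steps.} First I would set $G = aQ_\infty$ and invoke the standard theory of AG codes: if a differential $\eta$ has simple poles with residue $1$ at every point of the evaluation divisor, then $C_L(D+Q_0,G)^\perp = C_L(D+Q_0,\, (D+Q_0) - G + (\eta))$. As in the proof of Proposition~\ref{prop:dual}, one takes $\eta = h\,\omega$ with $h = (zf)'$; since $zf$ has only simple roots, $h$ is nonzero at $Q_0$ and at every point of $\mathrm{supp}(D)$, so $\eta$ has residue exactly $1$ at all these points. Second, passing from $\eta$ back to $\omega$ multiplies each coordinate by the nonzero scalar $h(P)$, which only changes the code up to column multipliers; hence $C_L(D+Q_0,G)^\perp \cong C_L(D+Q_0,\, H)$ with $H = (D+Q_0) - G + (\omega)$. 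Third, I would substitute the explicit divisor of $\omega$ from \eqref{eq:divomega}, namely $(\omega) = -Q_0 - D + (q^{2e+2}-q^{e+3}+2q^{e+2}-q^e+q^2-1)Q_\infty$, and simplify: the $-Q_0$ and $-D$ from $(\omega)$ cancel the $+Q_0$ and (the now‑added) $D$, so $H = -G + (q^{2e+2}-q^{e+3}+2q^{e+2}-q^e+q^2-1)Q_\infty = (q^{2e+2}-q^{e+3}+2q^{e+2}-q^e+q^2-1-a)Q_\infty$, which is the claimed one‑point divisor.

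\textbf{Where the work really sits.} The only genuine obstacle, and it is a minor one, is verifying that $\omega$ indeed has a simple pole with nonzero residue at $Q_0$ as well, not just at the points of $D$; this is what makes $D+Q_0$ (rather than $D$) an admissible evaluation divisor here. But this follows immediately from \eqref{eq:divz}, which shows $Q_0$ appears with coefficient $+1$ in $(zf)$, so $\frac{1}{fz}\,dz$ has a simple pole there. The residue at $Q_0$ under $\eta = h\omega$ is again forced to be $1$ because $zf$ has $Q_0$ as a simple zero, so $(zf)'/(zf)$ has residue $1$ at that simple zero. Everything else is the bookkeeping of the divisor arithmetic, and the resulting $H$ is a genuine one‑point divisor supported only at $Q_\infty$, in contrast with the two‑point situation of Proposition~\ref{prop:dual}.
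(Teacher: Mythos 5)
Your proposal is correct and is precisely the argument the paper intends: the paper's "proof" is the single remark that the corollary follows "using the same differential $\omega$ as above," i.e., re-running the proof of Proposition~\ref{prop:dual} with evaluation divisor $D+Q_0$, using that $zf$ has simple zeros at $Q_0$ and at all points of $D$ so that $\eta=(zf)'/(zf)\,dz$ has residue $1$ there, and then simplifying $H=(D+Q_0)-G+(\omega)$ via \eqref{eq:divomega}. Your write-up fills in exactly these steps, including the one point worth checking (that $Q_0$ is now an admissible evaluation point), so it matches the paper's approach.
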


\section{Computation of the order bound and results}
\label{sec:computation}

Now that all the theoretical tools are in place, all that is left is to give the lower bounds that we obtain using the above theory as well as state the improvements on the MinT tables \cite{MinT}. We would also like to explain briefly how we computed these bounds. The given algorithm works for any of the generalized GK curves $\chi_e$. We have already seen that the explicit description of the bijection $\tau_{0,\infty}$ in \cref{cor:tau}, implies that for $G=a_1Q_0+a_2Q_\infty$ and $Q \in \{Q_0,Q_\infty\}$, it is computationally easy to determine:
\begin{enumerate}
\item The dimension of $L(G)$, see \cref{thm:dimL}.
\item The dimension of $C_L(D,G)$ (and hence of $C_L(D,G)^\perp$), see \cref{lem:dimandeq}.
\item The sets $H(Q;G)$ (and hence the value of $\nu(Q;G)$), see \cref{cor:Gnongaps}.
\end{enumerate}
What is left is to describe how to find the best recursive use of \cref{prop:B}. We do this efficiently by using a dynamic programming approach, in the form of a backtracking algorithm which starts with large degree divisors, where the order bound coincides with the Goppa bound and is easy to compute, and then backtracks to smaller degree divisors. A pseudo-code description is given in \cref{alg:order_bound}. For $q=2$ and $e=3$ our results supplement and improve those in \cite{CT}, as indicated in \cref{tab:min_dist}.

\begin{algorithm}[h!]
  \caption{\textbf{:} \textsc{OrderBoundTable}}
  \label{alg:order_bound}
  \newcommand{\obt}{\mathrm{orderBound}}
  \newcommand{\maxdeg}{\Delta}
  \newcommand{\degree}{\delta}
  \begin{algorithmic}[1]
    \Require{parameters $q$ and $e$.}
    \Ensure{array containing the order bound for $C_L(D,aQ_0+bQ_\infty)^\perp$, for $a,b\in\N$ whose sum $a+b$ is at most $\maxdeg$, a bound beyond which the order bound and the Goppa bound coincide.}
    \State $g_e \assign (q-1)(q^{e+1}+q^e-q^2)/2$ \eolcomment{genus of $\chi_e$}
    \State $N_e \assign q^{2e+2}-q^{e+3}+q^{e+2}+1$ \eolcomment{number of rational points}
    \State $\maxdeg \assign N_e + 2g_e$ \eolcomment{if larger degree, order bound coincides with Goppa bound}
    \State $\obt \assign $ two-dimensional array of size $(\maxdeg+1) \times (\maxdeg+1)$
    \For{$a$ \textbf{from} $\maxdeg$ \textbf{to} $0$}
      \State $\obt[a,\maxdeg-a] = \maxdeg - 2g_e + 2$ \eolcomment{Goppa bound for degree $\maxdeg$}
    \EndFor
    \For{$\degree$ \textbf{from} $\maxdeg-1$ \textbf{to} $0$} \eolcomment{backtrack: iterate on decreasing degree $\degree=a+b$}
    \For{$a$ \textbf{from} $0$ \textbf{to} $\degree$}
      \State $b \assign \degree-a$
      \State \inlcomment{Walk on the horizontal edge}
      \State $U \assign \{\text{Weierstrass semigroup at } Q_0\} \cap \{0,\ldots,\degree+1\}$
      \State $V \assign \{bQ_\infty\text{-non-gaps at }Q_0\} \cap \{-b, \ldots,a+1\}$
      \State $\overline{U} \assign \{a+1 - u, u \in U\}$
      \State $w \assign$ cardinality of $\overline{U} \cap V$
      \If{$w\neq 0$ and $\dim(C_L(D,aQ_0+bQ_\infty)) \neq \dim(C_L(D,(a+1)Q_0+bQ_\infty))$}
        \State $\mathrm{hbound} \assign \min(w,\obt[a+1,b])$
      \Else{ $\mathrm{hbound} \assign \obt[a+1,b]$}
      \EndIf
      \State \inlcomment{Walk on the vertical edge}
      \State $U \assign \{\text{Weierstrass semigroup at } Q_\infty\} \cap \{0,\ldots,\degree+1\}$
      \State $V \assign \{aQ_0\text{-non-gaps at }Q_\infty\} \cap \{-a, \ldots, b+1\}$
      \State $\overline{U} \assign \{b+1 - u, u \in U\}$
      \State $w \assign$ cardinality of $\overline{U} \cap V$
      \If{$w\neq 0$ and $\dim(C_L(D,aQ_0+bQ_\infty)) \neq \dim(C_L(D,aQ_0+(b+1)Q_\infty))$}
        \State $\mathrm{vbound} \assign \min(w,\obt[a,b+1])$
      \Else{ $\mathrm{vbound} \assign \obt[a,b+1]$}
      \EndIf
      \State \inlcomment{Combine the obtained bounds}
      \State $\obt[a,b] \assign \max(\mathrm{hbound},\mathrm{vbound})$
    \EndFor
    \EndFor
  \end{algorithmic}
\end{algorithm}

\begin{table}
  \centering
  \[
  \begin{array}{|c|c|c|c|c||c|c|c|c|c|}
  \hline
  n   & k   & (a_1,a_2) & d_{2P} & d_{1P} & n   & k   & (a_1,a_2) & d_{2P} & d_{1P} \\
  \hline
  223 & 222 & (0,0)     & 2      & 2  & 223 & 203 & (22,7)    &\mathbf{13}&12   \\
  223 & 221 & (6,0)     & 2      & 2  & 223 & 202 & (22,8)    & 13     & 12     \\
  223 & 220 & (8,0)     & 2      & 2  & 223 & 201 & (31,0)    & 14     & 14     \\
  223 & 219 & (11,0)    & 3      & 3  & 223 & 200 & (28,4)    &\mathbf{15}&14   \\
  223 & 218 & (13,0)    & 3      & 3  & 223 & 199 & (28,5)    & 16     & 15     \\
  223 & 217 & (14,0)    & 3      & 3  & 223 & 198 & (28,6)    & 17     & 16     \\
  223 & 216 & (8,7)     & 4      & 3  & 223 & 197 & (28,7)    &\mathbf{18}&17   \\
  223 & 215 & (16,0)    & 4      & 4  & 223 & 196 & (28,8)    &\mathbf{19}&18   \\
  223 & 214 & (17,0)    & 5      & 5  & 223 & 195 & (37,0)    & 20     & 20     \\
  223 & 213 & (19,0)    & 6      & 6  & 223 & 10  & (215,7)   & 205    & 204    \\
  223 & 212 & (20,0)    & 6      & 6  & 223 & 9   & (216,7)   & 206    & 206    \\
  223 & 211 & (21,0)    & 6      & 6  & 223 & 8   & (217,7)   & 207    & 206    \\
  223 & 210 & (22,0)    & 6      & 6  & 223 & 7   & (218,7)   & 208    & 207    \\
  223 & 209 & (19,4)    & 8      & 6  & 223 & 6   & (219,7)   & 209    & 208    \\
  223 & 208 & (19,5)    & 9      & 6  & 223 & 5   & (220,7)   & 211    & 209    \\
  223 & 207 & (19,6)    & 9      & 7  & 223 & 4   & (222,7)   & 214    & 212    \\
  223 & 206 & (19,7)    & 10     & 8  & 223 & 3   & (231,0)   & 215    & 215    \\
  223 & 205 & (19,8)    & 11     & 9  & 223 & 2   & (226,6)   & 217    & 214    \\
  223 & 204 & (28,0)    & 12     & 12 & 223 & 1   & (228,6)   & 223    & 220    \\
  \hline
  \end{array}
  \]
  \caption{\cref{tab:min_dist} gives for $q=2$, $e=3$, $n=223$ and fixed $k$ a value of $(a_1,a_2)$ for which the estimate $d_{2P}$ for the minimum distance of the code $C_L(D,a_1Q_0+a_2Q_\infty)^\perp$ is largest. It is compared to the corresponding estimate $d_{1P}$ for the minimum distance of a code of the same length and dimension of the form $C_L(D,a_1Q_0)^\perp$ or $C_L(D,a_2Q_\infty)^\perp$. The four entries in boldface indicate new improvements on the MinT tables. In \cite{CT} it was already shown that the entries for $k \in \{198,199\}$ improve the MinT \cite{MinT} table, which is why we have not put those two values in boldface.}
  \label{tab:min_dist}
\end{table}

%
%
%
%
%

\end{document}